\documentclass{elsarticle-arxiv}
\usepackage[english]{babel}


\usepackage{relsize,xspace}
 \usepackage{xcolor}
 \usepackage{mathtools}
 \usepackage{todonotes}
 \usepackage{comment}
\usepackage{microtype}
\usepackage{amsmath}
\usepackage{amssymb}
\usepackage{amsfonts}
\usepackage{stmaryrd}
\usepackage{bm}
\usepackage{tikz}
\usepackage{refcount}
\usepackage{wrapfig}
\usepackage{thm-restate}
\usepackage{mathrsfs}
\usepackage[absolute]{textpos}
\usepackage[all]{nowidow}

\definecolor{blue}{rgb}{0.1,0.2,0.5}
\definecolor{brown}{rgb}{0.6,0.6,0.2}
\usepackage[ocgcolorlinks, linkcolor={blue}, citecolor={brown}]{hyperref}

\usepackage[amsmath,thmmarks,hyperref]{ntheorem}
\usepackage{cleveref}

\crefformat{page}{#2page~#1#3}%
\Crefformat{page}{#2Page~#1#3}%
\crefformat{equation}{#2(#1)#3}%
\Crefformat{equation}{#2(#1)#3}%
\crefformat{figure}{#2Figure~#1#3}%
\Crefformat{figure}{#2Figure~#1#3}%
\crefformat{section}{#2Section~#1#3}
\Crefformat{section}{#2Section~#1#3}
\crefformat{chapter}{#2Chapter~#1#3}
\Crefformat{chapter}{#2Chapter~#1#3}
\crefformat{chapter*}{#2Chapter~#1#3}
\Crefformat{chapter*}{#2Chapter~#1#3}
\crefformat{part}{#2Part~#1#3}
\Crefformat{part}{#2Part~#1#3}
\crefformat{enumi}{#2(#1)#3}
\Crefformat{enumi}{#2(#1)#3}

\usepackage{enumerate}

\usepackage{latexsym}


\theoremnumbering{arabic}
\theoremstyle{plain}
\theoremsymbol{}
\theorembodyfont{\itshape}
\theoremheaderfont{\normalfont\bfseries}
\theoremseparator{.}
\theorempreskip{4pt}
\theorempostskip{3pt}

\newtheorem{theorem}{Theorem}
\crefformat{theorem}{#2Theorem~#1#3}
\Crefformat{theorem}{#2Theorem~#1#3}

\newcommand{\newtheoremwithcrefformat}[2]{%
  \newtheorem{#1}[theorem]{#2}%
  \crefformat{#1}{##2\MakeUppercase#1~##1##3}%
  \Crefformat{#1}{##2\MakeUppercase#1~##1##3}%
}
\newcommand{\newseptheoremwithcrefformat}[2]{%
  \newtheorem{#1}{#2}%
  \crefformat{#1}{##2\MakeUppercase#1~##1##3}%
  \Crefformat{#1}{##2\MakeUppercase#1~##1##3}%
}

\newtheoremwithcrefformat{lemma}{Lemma}
\newtheoremwithcrefformat{proposition}{Proposition}
\newtheoremwithcrefformat{observation}{Observation}
\newtheoremwithcrefformat{conjecture}{Conjecture}
\newtheoremwithcrefformat{corollary}{Corollary}
\newseptheoremwithcrefformat{claim}{Claim}
\theorembodyfont{\upshape}
\newtheoremwithcrefformat{example}{Example}
\newtheoremwithcrefformat{remark}{Remark}
\newseptheoremwithcrefformat{definition}{Definition}

\theoremstyle{nonumberplain}
\theoremheaderfont{\scshape}
\theorembodyfont{\normalfont}
\theoremsymbol{\ensuremath{\square}}
\newtheorem{proof}{Proof}

\theoremsymbol{\ensuremath{\lrcorner}}
\newtheorem{clproof}{Proof}

\def\cqedsymbol{\ifmmode$\lrcorner$\else{\unskip\nobreak\hfil
\penalty50\hskip1em\null\nobreak\hfil$\lrcorner$
\parfillskip=0pt\finalhyphendemerits=0\endgraf}\fi}

\newcommand{\wcol}{\mathrm{wcol}}

\newcommand{\WReach}{\mathrm{WReach}}

\newcommand{\Oof}{\mathcal{O}}
\newcommand{\CCC}{\mathscr{C}}

\newcommand{\cutrk}{\mathrm{cutrk}}

\newcommand{\rw}{\mathrm{rw}}

\newcommand{\td}{\mathrm{td}}

\newcommand{\N}{\mathbb{N}}

\renewcommand{\phi}{\varphi}
\renewcommand{\epsilon}{\varepsilon}

\newcommand{\dist}{\mathrm{dist}}

\newcommand{\abs}[1]{\ensuremath{\left\lvert#1\right\rvert}}

\journal{European Journal of Combinatorics}

\title{On Low Rank-Width Colorings}

\author{O-joung Kwon\tnoteref{t0}}
\address{Department of Mathematics, Incheon National University, Incheon, South Korea
\\[2pt]
\texttt{ojoungkwon@gmail.com}}
\ead{ojoungkwon@gmail.com}
\author{Micha\l{} Pilipczuk}
\address{Institute of Informatics, University of Warsaw, Poland\\[2pt]
\texttt{michal.pilipczuk@mimuw.edu.pl}}
\ead{michal.pilipczuk@mimuw.edu.pl}
\author{Sebastian Siebertz\tnoteref{t2}}
\address{Institut f\"ur Informatik, 
Humboldt-Universit\"at zu Berlin, Germany\\[2pt]
\texttt{sebastian.siebertz@hu-berlin.de}}
\ead{sebastian.siebertz@hu-berlin.de}

\tnotetext[t0]{A preliminary version of this work appeared in the
  proceeding of WG 2017~\cite{kwon17}. This full version contains
  complete proofs of all the claimed results. This work was done while
  S. Siebertz was affiliated with University of Warsaw. M. Pilipczuk
  and S. Siebertz were supported by the National Science Centre of
  Poland via POLONEZ grant agreement UMO-2015/19/P/ST6/03998, which
  has received funding from the European Union's Horizon 2020 research
  and innovation programme (Marie Sk\l odowska-Curie grant agreement
  No.\ 665778). M. Pilipczuk was also supported by the Foundation for
  Polish Science (FNP) via the START stipend programme.}
\tnotetext[t2]{O. Kwon is supported by the National Research
  Foundation of Korea (NRF) grant funded by the Ministry of Education
  (No. NRF-2018R1D1A1B07050294).  }

\begin{document}

\begin{frontmatter}
\begin{abstract}
  We introduce the concept of \emph{low rank-width colorings},
  generalizing the notion of low tree-depth colorings introduced by
  Ne{\v{s}}et{\v{r}}il and Ossona~de Mendez in [Grad and classes with
  bounded expansion {I}. {D}ecompositions. Eur. J. Comb., 2008].
  We say that a class $\CCC$ of graphs admits \emph{low rank-width
    colorings} if there exist functions $N\colon \N\rightarrow\N$ and
  $Q\colon \N\rightarrow\N$ such that for all $p\in \N$, every graph
  $G\in \CCC$ can be vertex colored with at most $N(p)$ colors such
  that the union of any $i\leq p$ color classes induces a subgraph of
  rank-width at most $Q(i)$.

  Graph classes admitting low rank-width colorings strictly generalize
  graph classes admitting low tree-depth colorings and graph classes
  of bounded rank-width.  We prove that for every graph class $\CCC$
  of bounded expansion and every positive integer $r$, the class
  $\{G^r\colon G\in \CCC\}$ of $r$th powers of graphs from~$\CCC$
  admits low rank-width colorings.
  On the negative side, we show that the classes of interval graphs
  and permutation graphs do not admit low rank-width colorings.  As
  interesting side properties, we prove that every hereditary graph
  class admitting low rank-width colorings has the Erd\H{o}s-Hajnal
  property and is $\chi$-bounded.
\end{abstract}

\begin{textblock}{5}(11.13, 13.45)
\includegraphics[width=38px]{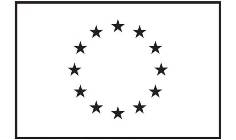}%
\end{textblock}
\end{frontmatter}

\section{Introduction and main results}

\noindent
A proper vertex coloring of a graph $G$ is an assignment of colors to
the vertices of~$G$ so that no two adjacent vertices are assigned the
same color, or in other words, so that every single color class
induces an independent set. In this simplest form, a vertex coloring
does not have to satisfy any constraints on the interaction between
two or more color classes. By adding such constraints we may be able
to derive a lot more structural information about a graph than from a
simple proper vertex coloring. For example, DeVos et
al.~\cite{devos2004excluding} introduced colorings, later called
\emph{$p$-tree-width colorings}, that, for a fixed integer parameter
$p$, have the property that the union of any $i\leq p$ color classes
induces a subgraph of tree-width at most $i-1$. In such a coloring,
every single color class induces a graph of tree-width~$0$, that is,
it is an independent set. Hence low tree-width colorings are in particular
proper vertex colorings.  The union of any two color classes induces a
graph of tree-width $1$, that is, a forest, the union of any three
color classes induces a graph of tree-width $2$, and so on. Using the
structure theorem of Robertson and Seymour~\cite{robertson2003graph}
for graphs excluding a fixed graph as a minor, De Vos et al.\ proved
that for every graph~$H$ and every integer $p\geq 1$, there is an
integer $N=N(H,p)$, such that every $H$-minor-free graph $G$ admits a
$p$-tree-width coloring with $N$ colors.

Despite the graph theoretic interest in such colorings, we also find
nice algorithmic applications. Consider e.g.\ the subgraph isomorphism
problem. Here, we are given two graphs~$G$ and~$H$ as input, and we
are asked to determine whether~$G$ contains a subgraph isomorphic to
$H$. In many natural settings the pattern graph $H$ we are looking for
is small and in such case a low-tree-width coloring as described above
is very useful. Assume $H$ has $p$ vertices and assume we can
efficiently compute a $p$-tree-width coloring of $G$ with $N$
colors. We will find a copy of~$H$ in $G$, if it exists, in at most
$p$ color classes. We can hence iterate through all combinations of at
most $p$ colors and, by the properties of the coloring, look for $H$
in a subgraph of $G$ of tree-width at most $p-1$. From an algorithmic
point of view, graphs of bounded tree-width are very well behaved
graphs.  Many NP-complete problems, in fact, all problems that can be
formulated in monadic second order logic, are solvable in linear time
on graphs of bounded
tree-width~\cite{courcelle1990graph,courcelle1990monadic}. In
particular, the subgraph isomorphism problem for every fixed pattern
graph~$H$ is solvable in linear time on any graph of bounded
tree-width, in our case, in time $f(p)\cdot n$ for some computable
function $f$. Hence, if we have already computed a $p$-tree-width
coloring of $G$ with $N$ colors, we need only an additional running
time of $\Oof(N^p\cdot f(p)\cdot n)$ to solve the subgraph isomorphism
problem.

%

\emph{Tree-depth} is another important and useful graph invariant. It
was introduced under this name in~\cite{nevsetvril2006tree}, but
equivalent notions were known before, including the notion of
\emph{rank}~\cite{nevsetvril2003order}, \emph{vertex ranking number}
and minimum height of an \emph{elimination
  tree}~\cite{bodlaender1998rankings,deogun1994vertex,schaffer1989optimal},
etc.  In~\cite{nevsetvril2006tree}, Ne\v{s}et\v{r}il and Ossona de
Mendez introduced the notion of 
\emph{$p$-tree-depth colorings} as vertex
colorings of a graph such that the union of any $i\leq p$ color
classes induces a subgraph of tree-depth at most $i$.  Note that the
tree-depth of a graph is always larger (at least by $1$) than its
tree-width, hence a low tree-depth coloring is a stronger requirement
than a low tree-width coloring.  Also based on the structure theorem,
Ne\v{s}et\v{r}il and Ossona de Mendez~\cite{nevsetvril2006tree} proved
that proper minor-closed classes admit even low tree-depth colorings.

Not much later, Ne\v{s}et\v{r}il and Ossona de
Mendez~\cite{nevsetvril2008grad} proved that proper minor-closed
classes are unnecessarily restrictive for the existence of low
tree-depth colorings. They introduced the notion of \emph{bounded
  expansion classes of graphs}, a concept that generalizes the concept
of classes with excluded minors and with excluded topological
minors. While the original definition of bounded expansion is in terms
of density of shallow minors, it turns out that low tree-depth colorings
give an alternative characterisation: a class $\CCC$ of graphs has
bounded expansion if and only if for all $p\in \N$ there exists an
integer $N=N(\CCC, p)$ such that every graph $G\in \CCC$ admits a
$p$-tree-depth coloring with $N$ colors~\cite{nevsetvril2008grad}.
For the even more general notion of \emph{nowhere dense classes of
  graphs}~\cite{nevsetvril2011nowhere}, it turns out that a class
$\CCC$ of graphs closed under taking subgraphs is nowhere dense if and
only if for every $p\in \N$ and every $\epsilon>0$ there exists $n_0$ such
that every $n$-vertex graph $G\in \CCC$ with $n\geq n_0$ admits a
$p$-tree-depth coloring with~$n^\epsilon$ colors.

Furthermore, there is a simple algorithm to compute such a
coloring in time~$\Oof(n)$ in case~$\CCC$ has bounded expansion
and in time $\Oof(n^{1+\epsilon})$ for any $\epsilon>0$ in 
case $\CCC$
is nowhere dense.  As a result, the subgraph isomorphism problem for
every fixed pattern $H$ can be solved in linear time on any class of
bounded expansion and in almost linear time on any nowhere dense
class. More generally, it was shown
in~\cite{dvovrak2013testing,grohe2011methods} that every fixed first
order property can be tested in linear time on graphs of bounded
expansion, implicitly using the notion of low tree-depth colorings,
and in almost linear time on nowhere dense
classes~\cite{grohe2014deciding}.

Note that bounded expansion and nowhere dense classes of graphs are
uniformly sparse graphs. In fact, bounded expansion classes of graphs
can have at most a linear number of edges and nowhere dense classes
can have no more than $\Oof(n^{1+\epsilon})$ many edges, for any fixed $\epsilon>0$.  This
motivates our new definition of \emph{low rank-width colorings} which
extends the coloring technique to dense classes of graphs which are
closed under taking induced subgraphs.

Rank-width was introduced by Oum and
Seymour~\cite{oum2006approximating} and aims to extend tree-width by
allowing well behaved dense graphs to have small rank-width. Also for
graphs of bounded rank-width there are many efficient algorithms based
on dynamic programming.  Here, we have the important meta-theorem of
Courcelle, Makowsky, and Rotics~\cite{courcelle2000linear}, stating
that for every monadic second-order formula (with set quantifiers
ranging over sets of vertices) and every positive integer~$k$, there
is an $\Oof(n^3)$-time algorithm to determine whether an input graph
of rank-width at most $k$ satisfies the formula.  There are several
parameters which are equivalent to rank-width in the sense that one is
bounded if and only if the other is bounded. These include
\emph{clique-width}~\cite{courcelle1993handle},
\emph{NLC-width}~\cite{wanke1994k}, and
\emph{Boolean-width}~\cite{bui2011boolean}.

\pagebreak
\paragraph*{Low rank-width colorings} We now introduce our main object
of study.

\begin{definition}
  A class $\CCC$ of graphs \emph{admits low rank-width colorings} if
  there exist functions $N:\N\rightarrow\N$ and $Q:\N\rightarrow\N$
  such that for all $p\in \N$, every graph $G\in \CCC$ can be vertex
  colored with at most $N(p)$ colors such that the union of any
  $i\leq p$ color classes induces a subgraph of rank-width at most
  $Q(i)$.
\end{definition}

As proved by Oum~\cite{oum2008rank}, every graph $G$ with tree-width
$k$ has rank-width at most $k+1$, hence every graph class which admits
low tree-width colorings (that is, every class of bounded expansion)
also admits low rank-width colorings.  On the
other hand, graphs admitting a low rank-width coloring can be very
dense.  We also remark that graph classes admitting low rank-width
colorings can be closed under taking induced subgraphs without spoiling this property, as rank-width
does not increase by removing vertices.

Let us remark that due to the model-checking algorithm of Courcelle et
al.~\cite{courcelle2000linear}, the (induced) subgraph isomorphism
problem is solvable in cubic time for every fixed pattern $H$ whenever
the input graph is given together with a low rank-width coloring for
$p=|V(H)|$, using $N(p)$ colors.  Again, it suffices to iterate
through all $p$-tuples of color classes and look for the pattern $H$
in the subgraph induced by these color classes; this can be done
efficiently since this subgraph has rank-width at most $Q(p)$. The
caveat is that the graph has to be supplied with an appropriate
coloring.  In this work we do not investigate the algorithmic aspects
of low rank-width colorings, and rather concentrate on the
combinatorial question of which classes admit such colorings, and
which do not.

\paragraph*{Related concepts}
Let us discuss the relation of our newly introduced
notion of low rank-width colorings with other related concepts studied
in the literature. The concept of \emph{bounded local 
clique-width} was introduced by Grohe and Tur\'an 
in~\cite{grohe2004learnability}. A class $\CCC$ of graphs
has bounded local clique-width if there is a function $f\colon \N\times
\N$ such that for every $r\geq 1$, $G\in\CCC$ and $v\in V(G)$, 
the clique-width of the subgraph of $G$ 
induced by the $r$-neighborhood of
$v$, $G[N_r(v)]$, is bounded by~$f(r)$. A more restricted 
concept, called \emph{nice local clique-width decompositions},
was introduced by Courcelle, Gavoille, and
Kant\'e~\cite{CourcelleGK2011}. They defined a \emph{nice
  $(r, \ell, g)$-cover} of a graph $G$ as a family $\mathcal{F}$ of
subsets of $V(G)$ which covers $G$ such that (1)
the $r$-neighborhood of every vertex is contained in some set in~$\mathcal{F}$, (2)~each set of $\mathcal{F}$ intersects at 
most $\ell$
other sets in $\mathcal{F}$, and (3) any union of $q$ sets in~$\mathcal{F}$ induces a subgraph of clique-width at most $g(q)$.  A
class $\CCC$ of graphs is \emph{nicely locally clique-width
  decomposable} if for each $r$, every graph in $\CCC$ admits a
nice $(r, \ell, g)$-cover for some~$\ell$ and~$g$. 
It is implicit in~\cite{CourcelleGK2011} that every nicely
locally clique-width decomposable class admits low rank-width
colorings (this terminology is not defined 
in~\cite{CourcelleGK2011} and we will provide an explicit
proof below). Also every nicely locally clique-width
decomposable class has bounded local clique-width, while there
are classes of bounded local clique-width that do not admit
low rank-width decompositions. Also not every class that admits
low rank-width decompositions admits nice local clique-width
decompositions. In particular, both the concept of nice local
clique-width decompositions and the concept of bounded local
clique-width are not robust under small non-local changes, such
as adding a universal vertex, that is, a vertex connected to all
other vertices. 

On the other hand, our newly introduced notion of
low rank-width colorings is robust under this operation, as we can
simply assign a new unique color to the newly added vertex and adding
one vertex can increase the rank-width of any graph by at most 
one. In particular, our notion of low rank-width colorable 
graphs generalizes the notion of low tree-depth colorable graphs, 
and hence the notion of bounded expansion, while the notion 
of Coucelle, Gavoille, and Kant\'e  does not even
generalize the notion of $H$-minor-free graphs.

\paragraph*{Our contribution} We study for several 
graph classes whether they admit low rank-width covers
or not. In \cref{sec:other-positive}, we make explicit the 
fact that graph classes that admit nice local clique-width 
decompositions admit low rank-width colorings. 
Courcelle, Gavoille, 
and Kant\'e also proved in~\cite{CourcelleGK2011} that the class of unit
interval graphs and the class of bipartite permutation graphs
are nicely locally clique-width decomposable. We give a 
short direct proof for the existence of low rank-width colorings for these classes.

In \cref{sec:bounded-expansion} we present our main technical
contribution and prove that for every class $\CCC$ of 
bounded expansion and every
integer $r\geq 2$, the class $\{G^r\colon G\in \CCC\}$ of $r$th powers
of graphs from $\CCC$ admits low rank-width colorings.  It is easy to
see that there are classes of bounded expansion such that
$\{G^r\colon G\in \CCC\}$ has both unbounded rank-width and does not
admit low tree-depth colorings, hence our notion generalizes both of
these concepts. Our results inspired further research, and in a
follow-up of this work~\cite{Gaj18} we proved that in fact every
first-order interpretation of a bounded expansion class (the $r$th
power of a graph is a simple first-order interpretation) has low
rank-width colorings (and in fact low shrub-depth colorings). We
comment on these new results in the Conclusion,
\cref{sec:conc}.

On the negative side, in \cref{sec:negative-results}, we show
that the classes of interval graphs and of permutation graphs do not
admit low rank-width colorings.  To show it, we give a general
construction called \emph{twisted chain graphs}, and show that if a
graph class contains a twisted chain graph of order $n$ for every
large $n$, then it does not admit low rank-width colorings.

Lastly, in \cref{sec:EHchi}, we show that every
hereditary graph class admitting low rank-width colorings has the
Erd\H{o}s-Hajnal property~\cite{ErdosH1989} and is
$\chi$-bounded~\cite{Gyarfas1987}.  Erd\H{o}s and
Hajnal~\cite{ErdosH1989} conjectured that for every graph $H$, there
is $\epsilon=\epsilon(H)>0$ such that every $n$-vertex graph having no
$H$ induced subgraph has either an independent set or a clique of size
$n^{\epsilon}$.  A graph class with this property is said to have
the Erd\H{o}s-Hajnal property. We refer to~\cite{Liebenau2019,
Chudnovsky2018, FoxPS2017, Chudnovsky2014, BonamyBT2016}
for background and recent progress on the Erd\H{o}s-Hajnal 
conjecture. 
Gy\'arf\'as~\cite{Gyarfas1987} introduced the notion of 
$\chi$-boundedness as a relaxation of the concept of perfect 
graphs. For background we refer to a survey of $\chi$-bounded 
classes by Scott and Seymour~\cite{surveychi}.



\section{Preliminaries}\label{sec:prelim}

All graphs in this paper are finite, undirected and simple, that is,
they do not have loops or parallel edges. Our notation is standard, we
refer to~\cite{diestel2012graph} for more background on graph theory.
We write $V(G)$ for the vertex set of a graph~$G$ and $E(G)$ for its
edge set.  A \emph{vertex coloring} of a graph $G$ with colors from~$S$ is a mapping $c\colon V(G)\rightarrow S$.  For each $v\in V(G)$,
we call $c(v)$ the color of $v$.  The {\em{distance}} between vertices
$u$ and $v$ in $G$, denoted $\dist_G(u,v)$, is the length of a
shortest path between~$u$ and $v$ in $G$, or $\infty$ if no
such path exists.  The \emph{$r$th power of a
  graph $G$} is the graph~$G^r$ with vertex set $V(G)$, where there is
an edge between two vertices $u$ and $v$ in $G^r$ if and only if their
distance in $G$ is at most~$r$.

Rank-width was introduced by Oum and
Seymour~\cite{oum2006approximating}. We refer to the
surveys~\cite{hlinveny2008width,oum2016rank} for more background.  For
a graph $G$, we denote the adjacency matrix of $G$ by $A_G$, where for
$x,y\in V(G)$, $A_G[x,y]=1$ if and only if $x$ is adjacent to $y$.
Let $G$ be a graph. We define the \emph{cut-rank} function
$\cutrk_{G} \colon 2^{V} \rightarrow \N$ such that $\cutrk_G(X)$ is
the rank of the matrix $A_G[X, V(G)\setminus X]$ over the field 
$GF(2)$
(if $X=\emptyset$ or $X=V(G)$, then we let $\cutrk_G(X)=0$).

A \emph{rank-decomposition} of $G$ is a pair $(T,L)$, where $T$ is a
subcubic tree (i.e.\ a tree where every node has degree $1$ or $3$)
with at least $2$ nodes and $L$ is a bijection from $V(G)$ to the set
of leaves of $T$.
For every edge $e\in E(T)$, $T-e$ has exactly two components 
$T_1, T_2$, and hence $e$ induces a vertex bipartition 
$(A^e_1, A^e_2)$ of $G$, where $A^e_i=L^{-1}(T_i)$. 
The \emph{width} of $e$ is defined as $\cutrk_G(A^e_1)$ 
(which is equal to $\cutrk_G(A^e_2)$). The \emph{width} 
of $(T,L)$ is the maximum width
over all edges in $T$, and the \emph{rank-width} of $G$, denoted by
$\rw(G)$, is the minimum width over all rank-decompositions of $G$.
If $\abs{V(G)}\le 1$, then~$G$ has no rank-decompositions, and the
rank-width of $G$ is defined to be $0$.

A \emph{tree-decomposition} of a graph $G$ is a pair $(T,\mathcal{B})$
consisting of a tree $T$ and a family
$\mathcal{B}=\{B_t\}_{t\in V(T)}$ of sets $B_t\subseteq V(G)$,
satisfying the following three conditions:
\begin{enumerate}[(T1)]
\item $V(G)=\bigcup_{t\in V(T)}B_t$;
\item for every $uv\in E(G)$, there exists a node $t$ of $T$ such that
  $\{u,v\}\subseteq B_t$;
\item for $t_1,t_2,t_3\in V(T)$,
  $B_{t_1}\cap B_{t_3}\subseteq B_{t_2}$ whenever $t_2$ is on the path
  from $t_1$ to $t_3$ in $T$.
\end{enumerate}
The \emph{width} of a tree-decomposition $(T,\mathcal{B})$ is
$\max\{ \abs{B_{t}}-1:t\in V(T)\}$.  The \emph{tree-width} of $G$ is
the minimum width over all tree-decompositions of $G$.

Let $G$ be a graph and let $G_1,\ldots, G_s$ be its connected
components.  Then the \emph{tree-depth} of $G$ is recursively defined
as
\[\td(G) = \begin{cases}
  1 & \text{if $|V(G)|=1$}\\
  1+\min_{v\in V(G)}\ \td(G-v) & \text{if $|V(G)|>1$ and $s=1$}\\
  \max_{1\leq i\leq s}\ \td(G_i) & \text{otherwise}.
\end{cases}\]

A graph is an \emph{interval graph} if it is the intersection graph of
a family $\mathcal{I}$ of intervals on the real line.  An interval
graph is a \emph{unit interval graph} if all intervals in
$\mathcal{I}$ have the same length.  A graph is a \emph{permutation
  graph} if it is the intersection graph of line segments whose
endpoints lie on two parallel lines.

\section{Nice local clique-width decompositions}\label{sec:other-positive}

\noindent In this section we first study the connections 
between the newly introduced notion of low rank-width colorings and 
the notion of nice local clique-width decompositions 
that was introduced by Courcelle, Gavoille, and 
Kant\'e~\cite{CourcelleGK2011}. We show that graph classes
that admit nice local clique-width decompositions admit
low rank-width colorings. We then prove that 
unit interval graphs and bipartite
permutation graphs admit low rank-width colorings. As discussed in the
introduction, this can be obtained by the result that these
classes are nicely locally clique-width decomposable, as shown by
Courcelle, Gavoille, and Kant\'e~\cite{CourcelleGK2011}.  On the other
hand, it is a simple application of Lozin's characterisations of those
two classes; so we would like to present direct proofs.  

\medskip

Recall that for integers $r,\ell\geq 1$ and a function $g\colon \N\rightarrow \N$, a family $\mathcal{F}$ of vertex subsets of $G$ is 
called a \emph{nice $(r, \ell, g)$-cover} if 
\begin{itemize}
\item $\bigcup_{S\in \mathcal{F}} S=V(G)$,
\item for every vertex $v$ of $G$, the $r$-neighborhood of $v$ is
  contained in some set of $\mathcal{F}$,
\item each set of $\mathcal{F}$ intersects at most $\ell$ other sets
  in $\mathcal{F}$,
\item any union of $q$ sets in $\mathcal{F}$ induces a subgraph of
  clique-width at most $g(q)$.
\end{itemize} 
A class $\mathcal{C}$ of graphs is \emph{nicely locally clique-width
  decomposable} if for each $r$, every graph in $\mathcal{C}$ admits a
nice $(r, \ell, g)$-cover for some $\ell$ and $g$.

\begin{theorem}[Implicit in~\cite{CourcelleGK2011}]\label{thm:CGK2011}
  Every nicely locally clique-width decomposable class admits low
  rank-width colorings.
\end{theorem} 
\begin{proof}
  Let $\CCC$ be a nicely locally clique-width decomposable
  class. Consider any $G\in\CCC$ and let $p\geq 1$ be an integer. We want
  to show that $G$ admits a low rank-width coloring with 
  $N=N(p)$ colors so that the union of any $i\leq p$ color classes
  induces a subgraph of rank-width at most $Q=Q(i)$ for 
  constants~$N$ and~$Q$ depending only on $p$ and $i$. 
  By assumption,
  $G$ admits a nice $(1, \ell, g)$-cover~$\mathcal{F}$ for 
  a constant~$\ell$ and function~$g$. Without loss of generality
  assume that $\ell\geq 3$. 

  As the neighborhood of each $v\in V(G)$ is contained in some
  set of $\mathcal{F}$, we can assign a mapping
  $f\colon V(G)\rightarrow \mathcal{F}$ such that the 
  neighborhood of $v$ is fully contained in $f(v)$.  

  Now, we define an auxiliary graph $H$ on vertex set
  $\mathcal{F}$ such that for $P,Q\in \mathcal{F}$, 
  $P$ and $Q$ are adjacent in $H$
  if and only if $P\cap Q\neq \emptyset$. As $\mathcal{F}$ is a nice
  $(1, \ell, g)$-cover, $H$ has maximum degree at most $\ell$. We
  conclude that the $p$-th power $H^p$ of $H$ has maximum 
  degree bounded by $\sum_{i=0}^p\ell^i<
  1+(p+1)\cdot \ell^p\eqqcolon N(p)$.  
  Thus, $H^p$ admits a proper coloring $\lambda$ with 
  $N$ colors. We define a coloring $\eta$ of $G$ by assigning 
  $v\in V(G)$ the color $\lambda(f(v))$. 

  We let $Q(i)\coloneqq g(i)$ and show that the union of 
  any $i\leq p$ color classes
  induces a subgraph of rank-width at most $Q(i)$;
  as $G$ and $p$ were chosen arbitrarily, this means that $\CCC$ admits low rank-width colorings.
  For this, it suffices to prove the following claim:
  if a vertex subset $X\subseteq V(G)$ receives at most $i$ different colors in the coloring $\eta$, then $G[X]$ has rank-width at most $Q(i)$.
  Since the rank-width of a disjoint union of graphs of rank-width at most $Q(i)$ also has rank-width at most $Q(i)$, we may focus on the case when $G[X]$ is connected.
  
  Let $\mathcal{X}=\{f(v)\colon v\in X\}\subseteq \mathcal{F}$.
  By the definition of $\eta$ we have that $\mathcal{X}$ receives at most $i$ different colors under $\lambda$.
  Note that whenever $uv$ is an edge in $G$, $f(u)$ and $f(v)$ are either equal or adjacent in $H$.
  Therefore, we conclude that since $G[X]$ is connected, $H[\mathcal{X}]$ is connected as well.
  
  Suppose for a moment that $|\mathcal{X}|>i$. Then we would be able to find a subset $\mathcal{X}'\subseteq \mathcal{X}$ with $|\mathcal{X}'|=i+1$ such that $H[\mathcal{X}']$ is connected a well.
  Observe that then the vertices of $\mathcal{X}'$ would be pairwise at distance at most $p$ in $H$, implying, by the construction of $\lambda$, that they would receive pairwise different colors in $\lambda$.
  This is a contradiction with the fact that $\mathcal{X}$ receives at most $i$ colors under $\lambda$, hence we conclude that $|\mathcal{X}|\leq i$.
  
  Since $\mathcal{F}$ is a $(1,\ell,g)$-cover, we conclude that the graph $G[\bigcup \mathcal{X}]$ has clique-width bounded by $g(i)$, hence also its rank-width is bounded by $g(i)$.
  As $G[X]$ is an induced subgraph of $G[\bigcup \mathcal{X}]$, we are done.
\end{proof}

Courcelle, Gavoille, and Kant\'e~\cite{CourcelleGK2011} showed that
the class of unit interval graphs is nicely locally clique-width
decomposable.  So, \cref{thm:CGK2011} implies that the class of
unit interval graphs admits low rank-width colorings. We 
nevertheless give an explicit proof of this observation. 

\begin{theorem}\label{thm:unit-interval}
  The class of unit interval graphs and the class of bipartite
  permutation graphs admit low rank-width colorings.
\end{theorem}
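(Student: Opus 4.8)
The plan is to treat the two graph classes separately but using the same high-level strategy: find, for each graph $G$ in the class, a linear order of $V(G)$ that captures the geometric structure, then color vertices according to their ``local intersection pattern'' with respect to this order, in such a way that any small number of color classes induces a graph whose intersection model is ``bounded-width'' in a sense that directly yields bounded rank-width. For unit interval graphs, I would fix a unit interval representation and let $v_1 <_L v_2 <_L \cdots <_L v_n$ be the order of vertices by left endpoint (equivalently, by right endpoint, since all intervals have the same length). The key structural fact is that a unit interval graph ordered this way is an \emph{indifference graph}: $v_i$ is adjacent to $v_j$ for $i<j$ if and only if $v_i$ is adjacent to $v_{j-1}$, so the neighborhood of each vertex is an interval in $L$. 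I would then color $v_i$ by its index modulo $2p$ (or some bounded number depending on $p$); the point is that if we pick $i\le p$ color classes, the induced subgraph $H$ consists of ``blocks'' of consecutive vertices, and across any cut in a natural decomposition tree following $L$, the cut-rank is controlled because adjacency between a prefix and a suffix of an indifference graph has rank at most~$1$ (the bipartite adjacency matrix of a ``staircase'' pattern has bounded rank). More carefully, one shows that unit interval graphs themselves have \emph{linear} rank-width bounded by a constant on any interval of the order, and the coloring is used only to ensure that each monochromatic-union subgraph decomposes into a bounded number of such intervals with bounded interaction; assembling the per-interval rank-decompositions along a caterpillar tree gives width bounded by a function of $p$ only.

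For bipartite permutation graphs, I would use the known characterization via a ``strong ordering'': a bipartite graph $G$ with parts $A$ and $B$ is a bipartite permutation graph if and only if there are linear orders $<_A$ on $A$ and $<_B$ on $B$ such that the neighborhoods of vertices in $A$ are intervals in $<_B$ and they satisfy the ``staircase'' (adjacency-closure) property, and symmetrically. Again, each vertex's neighborhood is an interval in the opposite order, and consecutive vertices have ``nested-or-shifted'' neighborhoods. I would color the vertices of $A$ and of $B$ each by position modulo some bounded number; then for any $i \le p$ chosen color classes the induced subgraph lives within a bounded number of ``bands'' on each side, and I would build a rank-decomposition by splitting recursively along the two linear orders. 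The crucial computation is that for a bipartite permutation graph with this staircase structure, any partition of $A$ into a left part and a right part (an interval split) yields a cut whose matrix $A_G[\text{left}, \text{right}\cup B]$ has bounded rank --- this follows because the ``boundary'' of the staircase between consecutive vertices changes monotonically, so only a bounded number of distinct rows/columns appear up to the binary-field span.

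The main obstacle I expect is making precise the claim that these intersection classes have \emph{bounded rank-width outright} (or bounded linear rank-width), which is actually the cleanest route: if unit interval graphs and bipartite permutation graphs already have rank-width bounded by an absolute constant, then the theorem is trivial with $N(p) = 1$ and $Q$ constant. I suspect that is \emph{false} --- unit interval graphs can have unbounded rank-width (e.g., large ``powers of paths'' are unit interval graphs, and the abstract explicitly lists these classes as having unbounded rank-width) --- so the coloring is genuinely necessary, and the real work is proving that after the modular coloring, the union of $i \le p$ classes has rank-width bounded by a function $Q(i)$ independent of $n$. The delicate point is bounding the cut-rank across the decomposition tree: one must show that when we restrict to $i$ color classes and cut the linear order, the induced bipartite adjacency pattern between the two sides, although it may span many vertices, has only $O(f(i))$ ``types'' of rows, so its binary rank is $O(f(i))$. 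I would prove this by a direct case analysis on how the staircase boundary interacts with the deleted color classes, using the fact that within each retained block the indifference/staircase structure is preserved. For unit interval graphs this should reduce to the observation that the number of distinct neighborhood-traces across a cut, restricted to $i$ classes, is at most a function of $i$; for bipartite permutation graphs one handles the two sides symmetrically and combines.
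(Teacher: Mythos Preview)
Your proposal has a genuine gap: the coloring you describe --- by position in the indifference order modulo a constant depending on $p$ --- does not bound the rank-width of the union of $i\le p$ color classes. The claim that ``adjacency between a prefix and a suffix of an indifference graph has rank at most $1$'' is false. Across such a cut the bipartite adjacency matrix is a staircase (upper-triangular with ones), and a $k\times k$ staircase has rank $k$ over $\mathrm{GF}(2)$. Concretely, take $G=P_n^k$ (the $k$th power of a path), which is a unit interval graph; the cut in the middle of the indifference order has cut-rank exactly $k$. If you color vertex $j$ by $j\bmod 2p$ and keep a single color class, the induced subgraph is (up to reindexing) $P_{n/2p}^{\lfloor k/2p\rfloor}$, whose rank-width still grows with $k$. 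So your coloring reduces the number of vertices but not the ``bandwidth'' parameter that actually controls rank-width here, and no function $Q(i)$ independent of $G$ exists for this scheme. The same issue arises for bipartite permutation graphs with the staircase order.

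The paper's proof avoids this by coloring along the \emph{other} axis. Lozin showed that every unit interval (resp.\ bipartite permutation) graph on $n$ vertices embeds as an induced subgraph of a universal graph $\widetilde{H}_{n,n}$ (resp.\ $H_{n,n}$), which is organised into $n$ ``rows'' $V_1,\ldots,V_n$, and that the rank-width of $H_{n,m}$ and $\widetilde{H}_{n,m}$ is at most $3n$ regardless of $m$. One then colors by \emph{row index} modulo $p+1$. Any $i\le p$ color classes induce a subgraph each of whose components lives in at most $i$ consecutive rows, hence is an induced subgraph of $H_{i,m}$ or $\widetilde{H}_{i,m}$ and has rank-width at most $3i$. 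The point is that rank-width is bounded by the number of rows, not by anything about the linear position within a row; your coloring slices the wrong coordinate.
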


We recall the characterizations of these classes obtained by
Lozin~\cite{Lozin2011}.  Let \mbox{$n,m\geq 1$}. We denote by $H_{n,m}$ the
graph with $n\cdot m$ vertices which can be partitioned into $n$
independents sets
$V_1=\{v_{1,1},\ldots, v_{1,m}\}, \ldots, V_n=\{v_{n,1},\ldots,
v_{n,m}\}$
so that for each $i\in \{1,\ldots, n-1\}$ and for each
$j, j'\in \{1,\ldots, m\}$, vertex $v_{i,j}$ is adjacent to
$v_{i+1,j'}$ if and only if $j'\in \{1, \ldots, j\}$, and there are no
edges between~$V_i$ and~$V_j$ if $\abs{i-j}\ge 2$.  The graph
$\widetilde{H}_{n,m}$ is the graph obtained from $H_{n,m}$ by
replacing each independent set $V_i$ by a clique.

\begin{lemma}\label{lem:rwbound}
The following statements hold:
\begin{enumerate}
\item (Lozin~\cite{Lozin2011}) The rank-width of $H_{n,m}$ and of
  $\widetilde{H}_{n,m}$ is at most $3n$.
\item (Lozin~\cite{Lozin2011}) Every bipartite permutation graph on
  $n$ vertices is isomorphic to an induced subgraph of $H_{n,n}$.
\item (Lozin~\cite{LozinR2007}) Every unit interval graph on $n$
  vertices is isomorphic to an induced subgraph
  of~$\widetilde{H}_{n,n}$.
\end{enumerate}
\end{lemma}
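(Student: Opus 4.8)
The lemma collects three facts. The first is a genuine rank-width bound on the explicit graphs $H_{n,m}$ and $\widetilde{H}_{n,m}$; the second and third are embedding statements realizing bipartite permutation graphs and unit interval graphs as induced subgraphs of these universal graphs. I would treat the rank-width bound as the combinatorial core and derive it from a single linear (``column'') layout, and then handle the two embeddings through the standard ordering characterizations of the two classes.

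For part~1, the plan is to exhibit a rank-decomposition coming from a linear order (a caterpillar) and to bound the cut-rank of every prefix. Order the vertices by columns $C_1,C_2,\ldots,C_m$, where $C_j=(v_{1,j},v_{2,j},\ldots,v_{n,j})$, and within a column by layer index. Consider an arbitrary prefix $X$: it consists of all columns $1,\ldots,t-1$ together with a layer-prefix $v_{1,t},\ldots,v_{s,t}$ of column $t$. The key observation is that for a vertex $x=v_{i,j}\in X$, its neighborhood in $V\setminus X$ is almost entirely determined by its layer index $i$. Indeed $N(x)$ meets only layers $i-1,i,i+1$ (the last two only in $\widetilde{H}$). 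Since every vertex of $X$ lies in a column $\le t$, the staircase condition forces $x$ to be adjacent to every vertex of $V_{i-1}\cap(V\setminus X)$; the clique on $V_i$ forces adjacency to all of $V_i\cap(V\setminus X)$; and the only neighbor $x$ can have in layer $i+1$ outside $X$ is the single boundary vertex $v_{i+1,t}$, which occurs for exactly one value of $i$. Hence the matrix $A_G[X,V\setminus X]$ has at most $n+1$ distinct rows, so $\cutrk_G(X)\le n+1\le 3n$. Taking the maximum over all prefixes bounds the width of the decomposition, giving $\rw(H_{n,m}),\rw(\widetilde{H}_{n,m})\le 3n$.

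For parts~2 and~3 the plan is to use the ordering characterizations of the two classes and read the layer and column coordinates off them. A bipartite permutation graph admits a strong ordering of its two sides, which puts its bipartite adjacency matrix into monotone staircase form (each neighborhood an interval, with endpoints shifting monotonically); I would assign to each vertex a layer according to its side and position in the chain of intervals, and a column according to the relevant interval endpoint, and then verify that the defining staircase relation of $H_{n,n}$ reproduces exactly the edges and non-edges of $G$. For unit interval graphs I would use the indifference (proper-interval) order, in which closed neighborhoods are intervals with both endpoints non-decreasing and no neighborhood properly containing another; here the per-layer cliques of $\widetilde{H}_{n,n}$ model mutually overlapping unit intervals while the between-layer staircase models the sliding of interval endpoints, and the same endpoint-to-coordinate assignment yields the induced embedding.

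The routine part is part~1, where only the constant needs care (the boundary column contributes $O(1)$ extra distinct rows, comfortably inside the $3n$ budget). The main obstacle is the embedding direction in parts~2 and~3: one must check that the coordinate assignment preserves non-edges as well as edges, i.e.\ that it produces an \emph{induced} subgraph, which is exactly where the ``strong''/proper condition of the ordering is needed to rule out spurious adjacencies in $H_{n,n}$ or $\widetilde{H}_{n,n}$. Since these embeddings are precisely the content of Lozin's and Lozin--Rautenbach's structural results, I would either invoke them directly or reprove them through the explicit coordinate construction just sketched, taking care that distinct vertices of $G$ receive distinct coordinates so that the image is genuinely an induced subgraph on $\abs{V(G)}\le n$ vertices inside $H_{n,n}$ respectively $\widetilde{H}_{n,n}$.
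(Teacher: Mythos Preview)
The paper does not give its own proof of this lemma; it is stated as a compilation of known results, with parts~1 and~2 attributed to Lozin and part~3 to Lozin and Rudolf (not Rautenbach, as you write). So there is nothing in the paper to compare against, and your proposal goes beyond the paper by actually sketching arguments.

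Your argument for part~1 is essentially correct and in fact yields the sharper bound $\rw(H_{n,m}),\rw(\widetilde{H}_{n,m})\le n+1$, well inside the stated $3n$. The caterpillar layout by columns works as you describe: for a prefix $X$ consisting of the full columns $1,\dots,t-1$ together with $\{v_{1,t},\dots,v_{s,t}\}$, every $v_{i,j}\in X$ with $(i,j)\ne(s,t)$ has neighborhood in $V\setminus X$ equal to $V_{i-1}\cap(V\setminus X)$ (plus $V_i\cap(V\setminus X)$ in the $\widetilde{H}$ case), which depends only on the layer index $i$; hence there are at most $n$ distinct rows, plus one exceptional row for $v_{s,t}$, which additionally sees $v_{s+1,t}$. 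One phrasing issue: you say the exceptional neighbor ``occurs for exactly one value of $i$'', but it is really realized by exactly one \emph{vertex} $v_{s,t}$, not by all vertices in layer $s$; the conclusion is unaffected.

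For parts~2 and~3 your outline is the right one --- strong orderings for bipartite permutation graphs and the proper-interval (indifference) characterization for unit interval graphs are precisely the tools behind the cited embeddings --- but the concrete coordinate assignment and the verification that it gives an \emph{induced} subgraph are not carried out, and that verification is the entire substance of the cited results. Since the paper is content to cite Lozin and Lozin--Rudolf here, there is no discrepancy with the paper's treatment; if, however, you intend a self-contained proof, the parts~2/3 sketch would need to be worked out in full.
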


Hence, in order to prove~\cref{thm:unit-interval}, it suffices to
prove that the graphs $H_{n,m}$ and $\widetilde{H}_{n,m}$ admit low
rank-width colorings.

\begin{proof}[of \cref{thm:unit-interval}]
  For every positive integer $p$, let $N(p):=p+1$ and $Q(i):=3i$ for
  each $i\in \{1, \ldots, p\}$.  We prove that for all $n,m\geq 1$,
  the graphs~$H_{n,m}$ and $\widetilde{H}_{n,m}$ can be vertex colored
  using $N(p)$ colors so that each of the connected components of the
  subgraph induced by any $i\le p$ color classes has rank-width at
  most $Q(i)$. As rank-width and rank-width colorings are monotone
  under taking induced subgraphs, the statement of the theorem follows
  from~\cref{lem:rwbound}.

  Assume that the vertices of $H_{n,m}$ (and $\widetilde{H}_{n,m}$,
  respectively) are, as in the definition, named
  $v_{1,1},\ldots, v_{1,m}, \ldots, v_{n,1},\ldots, v_{n,m}$.  We
  color the vertices in the $i$th row, $v_{i,1},\ldots, v_{i,m}$, with
  color $j+1$ where $j \in \{ 0, 1, \ldots, p\}$ and
  $i\equiv j {\pmod {p+1}}$. Then any connected component $H$ of a
  subgraph induced by $i\leq p$ colors is isomorphic to $H_{i',m}$
  ($\widetilde{H}_{i',m}$, respectively) for some $i'\leq i$.  Hence,
  according to~\cref{lem:rwbound}, $H$ has rank-width at most
  $3i=Q(i)$, as claimed.
\end{proof}

\section{Powers of sparse graphs}\label{sec:bounded-expansion}

\noindent In this section we show that the class of $r$th powers of graphs from
a bounded expansion class admit low rank-width colorings.
The original definition of bounded expansion classes by
Ne\v{s}et\v{r}il and Ossona de Mendez~\cite{nevsetvril2008grad} is in
terms of bounds on the density of bounded depth minors. We will work
with the characterisation by the existence of low tree-depth colorings
as well as by a characterisation in terms of bounds on generalized
coloring numbers.

\begin{theorem}[Ne\v{s}et\v{r}il and Ossona de
  Mendez~\cite{nevsetvril2008grad}]\label{thm:be-lcw}
  A class $\CCC$ of graphs has bounded expansion if and only if for
  all integers $p\geq 0$ there exists an integer $N=N(\CCC, p)$ such that every
  graph $G\in \CCC$ admits a $p$-tree-depth coloring with~$N$ colors.
\end{theorem}
Our main result in this section is the following.
\begin{theorem}\label{thm:be-lrw}
  Let $\CCC$ be a class of bounded expansion and $r\geq 2$ be an
  integer.  Then the class $\{G^r\colon G\in \CCC\}$ of $r$th powers
  of graphs from $\CCC$ admits low rank-width colorings.
\end{theorem}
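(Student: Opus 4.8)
The plan is to combine two ingredients: a low tree-depth coloring of the ground graph $G$ from the bounded-expansion class $\CCC$, and the observation that a bounded-tree-depth graph, once we pass to its $r$th power, has bounded rank-width. More precisely, I would first fix $p$, let $H = G^r$ be the graph we wish to color, and apply \cref{thm:be-lcw} to $G$ to obtain a $q$-tree-depth coloring $c$ of $G$ with $N(\CCC, q)$ colors, where $q$ is a suitably large function of $p$ and $r$ — I expect $q$ on the order of $pr$ or so, since when we take $i \le p$ color classes of $H$ we want the corresponding vertex set in $G$, together with its ``$r$-neighborhoods needed to witness distances,'' to still be captured by a bounded number of color classes of $c$. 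The coloring of $H$ is then simply $c$ itself (or a refinement of it). The point is that if $W \subseteq V(H)$ is the union of $i \le p$ color classes, then $H[W] = G^r[W]$, and the induced subgraph $G[W']$ for the relevant $W' \supseteq W$ (vertices of $W$ plus internal vertices of short connecting paths) lies inside the union of at most $Q(i)$ color classes of $c$, hence has tree-depth at most $Q(i)$.

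The key step is therefore the following lemma, which I would isolate and prove separately: if $J$ is a graph of tree-depth at most $d$, then $J^r$ has rank-width bounded by some function $f(d,r)$. Here I would use that a graph of tree-depth $d$ has an elimination forest of height $d$; the ancestor/descendant structure of this forest gives a natural laminar family of vertex subsets, and any two vertices at distance $\le r$ in $J$ are connected by a path that must repeatedly ``go up and down'' in the forest, so it is controlled by at most $r$ forest-edges and hence stays within a bounded-height ``tube'' around the two endpoints. This should let me build a rank-decomposition of $J^r$ following the elimination forest and argue that each cut has bounded cut-rank, because across any cut the adjacency in $J^r$ is determined by a bounded amount of information (which ancestors along the boundary are ``close'' to which descendants). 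Alternatively, and perhaps more cleanly, I would invoke that bounded tree-depth implies bounded tree-width, that tree-width is not increased by more than a $(\Delta+1)^r$-type factor — no: powers do blow up tree-width in general, so that route fails. The honest route is to note that bounded tree-depth gives bounded tree-depth of the closure / bounded pathwidth-like structure, and then show $J^r$ has bounded clique-width directly via the elimination forest, reading off a clique-width expression that labels vertices by their ``profile'' (distance vector to the $\le d$ ancestors), of which there are at most $(r+1)^d$ many.

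Putting it together: given $p$, set $d := $ the tree-depth bound we will need, choose $q$ so that the union of any $i \le p$ color classes of $H$ pulls back to a set whose ``$r$-padding'' in $G$ sits in $\le i \cdot g(r)$ color classes of $c$ for an appropriate $g$ (this requires a small argument that short paths in $G$ between $W$-vertices use few colors, which follows from the tree-depth coloring property applied to a slightly larger $i$), and then $Q(i) := f(i \cdot g(r), r)$ via the lemma, and $N(p) := N(\CCC, q)$ via \cref{thm:be-lcw}. I expect the main obstacle to be the lemma on rank-width of powers of bounded-tree-depth graphs — in particular, verifying carefully that each cut in the decomposition built from the elimination forest really has bounded cut-rank, i.e.\ that the bipartite adjacency pattern of $J^r$ across the cut has bounded $\mathbb{F}_2$-rank; this is where the $(r+1)^d$ profile bound does the work, since two vertices with the same profile have the same adjacency across the cut, so the matrix has at most $(r+1)^d$ distinct rows. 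A secondary, more technical obstacle is handling the interaction between ``internal'' vertices of connecting paths (which may have colors outside $W$) and the bookkeeping of how many color classes of $c$ we must union — this is routine but needs care to get the dependence of $q$ on $p$ and $r$ right.
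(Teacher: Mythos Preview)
Your high-level architecture matches the paper's: take a low tree-depth coloring of $G$, observe that powers of bounded-tree-depth (in fact bounded-tree-width) graphs have bounded rank-width, and bridge the gap that $G^r[W]$ need not equal $G[W]^r$ by passing to a superset $W'\supseteq W$ that is an ``$r$-shortest-path closure'' of $W$ and still uses boundedly many colors. Your lemma that $J^r$ has bounded rank-width when $J$ has bounded tree-depth is essentially the cited result of Gurski and Wanke (\cref{thm:tdtorw}), and your profile argument would also work.

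The genuine gap is the step you flag as ``secondary'' and ``routine'': producing, for every $i$-colored set $W$, an $r$-shortest-path closure $W'$ that uses only $g(r)\cdot i$ colors. This does \emph{not} follow from the tree-depth coloring property. A $q$-tree-depth coloring only promises that any fixed union of $j\le q$ color classes induces a subgraph of tree-depth $\le j$; it says nothing about which colors appear on short $G$-paths between two vertices of $W$. Those internal vertices can a priori carry arbitrary colors, and there is no mechanism in the tree-depth coloring alone that bounds how many distinct colors you must add to $W$ to realise all distances $\le r$. So taking ``$c$ itself'' as the coloring of $G^r$ fails, and your parenthetical ``or a refinement of it'' is exactly where the real work hides.

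The paper resolves this with a second characterisation of bounded expansion, the weak $r$-coloring numbers (\cref{thm:charbddexp}). It constructs a refinement $c'$ of $c$ in which the color of each vertex $v$ records the $c$-colors of all vertices in $\WReach_r[G,L,v]$, plus one carefully chosen witness per weakly $r$-reachable vertex (\cref{lem:goodrefinement}). Because $|\WReach_r[G,L,v]|\le \wcol_r(G)$ is bounded on $\CCC$, this refinement uses boundedly many colors, and by construction any $i$ color classes of $c'$ admit an $r$-shortest-path \emph{hitter} inside $O(\wcol_r(G))\cdot i$ classes of $c$. Iterating this for $\ell=2,\ldots,r$ upgrades hitters to closures (\cref{lem:excellentrefinement}). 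This refinement via weak reachability is the missing idea in your proposal; without it, the argument does not go through.
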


For a graph $G$, we denote by $\Pi(G)$ the set of all linear orders of
$V(G)$.  For $u,v\in V(G)$ and an integer $r\geq 0$, we say
that~$u$ is \emph{weakly $r$-reachable} from~$v$ with respect to a
linear order $L$, if there is a path $P$ of length at most~$r$ 
between~$u$ and $v$ such that $u$ is the smallest among the 
vertices of $P$
with respect to~$L$.  We denote by $\WReach_r[G,L,v]$ the set of
vertices that are weakly $r$-reachable from~$v$ with respect to~$L$.
The \emph{weak $r$-coloring number $\wcol_r(G)$} of~$G$ is defined as
\begin{eqnarray*}
  \wcol_r(G)& := & \min_{L\in\Pi(G)}\:\max_{v\in V(G)}\:
                   \bigl|\WReach_r[G,L,v]\bigr|.
\end{eqnarray*}

The weak coloring number was introduced by Kierstead and
Yang~\cite{kierstead2003orders} in the context of coloring and marking
games on graphs.  As shown by Zhu~\cite{zhu2009coloring}, classes of
bounded expansion can be characterised by the weak coloring numbers.

\begin{theorem}[Zhu~\cite{zhu2009coloring}]\label{thm:charbddexp}
  A class $\CCC$ has bounded expansion if and only if for all
  integers $r\geq 0$ there is an integer $f(r)$ such that for all $G\in \CCC$
  we have $\wcol_r(G)\leq f(r)$.
\end{theorem}

In order to prove \cref{thm:be-lrw}, we will first compute a low
tree-depth coloring. We would like to apply the following theorem,
relating the tree-width (and hence in particular the tree-depth) of a
graph and the rank-width of its $r$th power.

\begin{theorem}[Gurski and Wanke~\cite{GurskiW2009}]\label{thm:tdtorw}
  Let $p\geq 0,r\ge 2$ be integers.  If a graph~$H$ has tree-width at most
  $p$, then $H^r$ has rank-width at most $2(r+1)^{p+1}-2$.
\end{theorem}

We remark that Gurski and Wanke~\cite{GurskiW2009} proved this bound
for clique-width instead of rank-width, but clique-width is never
smaller than the rank-width~\cite{oum2006approximating}.

The natural idea would be just to combine the bound of
\cref{thm:tdtorw} with low tree-depth coloring given by
\cref{thm:be-lcw}.  Note however, that when we consider any subgraph
$H$ induced by $i\leq p$ color classes, the graph $H^r$ may be
completely different from the graph $G^r[V(H)]$, due to paths that are
present in $G$ but disappear in $H$.  Hence we cannot directly apply
\cref{thm:tdtorw}. Instead, we will prove the existence of a refined
coloring of $G$ such that for any subgraph $H$ induced by $i\leq p$
color classes, in the refined coloring there is a subgraph $H'$ such
that $G^r[V(H)]\subseteq H'^r$ and such that $H'$ gets only $g(i)$
colors in the original coloring, for some fixed function $g$. We can
then apply \cref{thm:tdtorw} to $H'$ and use the fact that rank-width
is monotone under taking induced subgraphs.

In the following, we will say that a vertex subset $X$ {\em{receives}}
a color $i$ under a coloring $c$ if $i\in c^{-1}(X)$.  We first need
the following definitions.

\begin{definition}
  Let $G$ be a graph, $X\subseteq V(G)$ and $r\geq 2$.  A superset
  $X'\supseteq X$ is called \emph{an $r$-shortest path hitter} for $X$
  if for all $u,v\in X$ with $1<\dist_G(u,v)\leq r$, $X'$ contains an
  internal vertex of some shortest path between $u$ and $v$.
\end{definition}

\begin{definition}
  Let $G$ be a graph, let $c$ be a coloring of $G$, and let 
  $p\geq 1,r\geq 2$ and $d\geq 1$ be integers. A coloring $c'$ 
  is a \emph{$(d,r)$-good refinement} of
  $c$ if for every vertex set $X\subseteq V(G)$, there exists an
  $r$-shortest path hitter $X'$ of $X$ such that if~$X$ receives at
  most~$p$ colors under $c'$, then $X'$ receives at most $d\cdot p$
  colors under $c$.
\end{definition}


\noindent We use the weak coloring numbers to prove the existence of a
good refinement.

\begin{lemma}\label{lem:goodrefinement}
  Let $G$ be a graph and $k\geq 1,r\ge 2$ be integers.  Then every coloring
  $c$ of~$G$ using $k$ colors has a $(2\wcol_r(G), r)$-good refinement
  coloring using $k^{2\wcol_r(G)}$ colors.
\end{lemma}
\begin{proof}
  Let $\Gamma$ be the set of colors used by $c$, and let
  $d:=2\wcol_r(G)$.  The $(d,r)$-good refinement~$c'$ that we are
  going to construct will use subsets of $\Gamma$ of size at most $d$
  as the color set; the number of such subsets is at most
  $k^{2\wcol_r(G)}$.  Let $L$ be a linear order of $V(G)$ with
  $\max_{v\in V(G)}\: \bigl|\WReach_r[G,L,v]\bigr|=\wcol_r(G)$.  We
  construct a new coloring $c'$ as follows:
  \begin{enumerate}[(1)]
  \item Start by setting $c'(v):=\emptyset$ for each $v\in V(G)$.
  \item For each pair of vertices $u$ and $v$ such that
    $u\in \WReach_r[G, L, v]$, we add the color $c(u)$ to $c'(v)$.
  \item For each pair $u,v$ of non-adjacent vertices such that
    $u<_L v$ and $u\in \WReach_r[G,L,v]$, we do the following.  Check
    whether there is a path $P$ of length at most $r$ connecting $u$
    and $v$ such that all the internal vertices of $P$ are larger than
    both $u$ and $v$ in $L$.  If there is no such path, we do nothing
    for the pair $u,v$.  Otherwise, fix one such path $P$, chosen to
    be the shortest possible, and let $z$ be the vertex traversed by
    $P$ that is the largest in $L$.  Then add the color $c(z)$
    to~$c'(v)$.
  \end{enumerate}
  Thus, every vertex $v$ receives in total at most $2\wcol_{r}(G)$
  colors of $\Gamma$ to its final color $c'(v)$: at most
  $\wcol_{r}(G)$ in step (2), and at most $\wcol_{r}(G)$ in step (3),
  because we add at most one color per each $u\in \WReach_r[G,L,v]$.
  It follows that each final color $c'(v)$ is a subset of $\Gamma$ of
  size at most $2\wcol_{r}(G)=d$.

  We claim that $c'$ is a $(d,r)$-good refinement of $c$.  Let
  $X\subseteq V(G)$ be a set that receives at most $p$ colors under
  $c'$, say colors $A_1, \ldots, A_p\subseteq \Gamma$.  Let $X'$ be
  the set of vertices of $G$ that are colored by colors in
  $A_1\cup \cdots \cup A_p$ under $c$.  Since $|A_i|\leq d$ for each
  $i\in \{1,\ldots,p\}$, we have that $X'$ receives at most $d\cdot p$
  colors under $c$.
	
  To show that $X'$ is an $r$-shortest path hitter of $X$, let us
  choose any two vertices $u$ and $v$ in~$X$ with $u<_L v$ and
  $1<\dist_G(u,v)\leq r$.  If there is a shortest path from $u$ to $v$
  whose all internal vertices are larger than $u$ and $v$ in $L$, by
  step (3), $X'$ contains a vertex that is contained in one such path.
  Otherwise, a shortest path from $u$ to $v$ contains a vertex $z$
  with $L(z)<L(v)$ other than $u$ and~$v$.  This implies that there
  exists $z'\in \WReach_r[G,L,v]\setminus \{u\}$ on the path such that
  $c(z')\in c'(v)$, and hence $z'\in X'$ by step (2).  Therefore, $X'$
  is an $r$-shortest path hitter of $X$, as required.
\end{proof}

\begin{definition}
  Let $G$ be a graph, let $X\subseteq V(G)$, and let $r\geq 1$ be an
  integer.  A superset $X'\supseteq X$ is called \emph{an $r$-shortest
    path closure of $X$} if for each $u,v\in X$ with
  $\dist_G(u,v)=\ell\leq r$, $G[X']$ contains a path of length $\ell$
  between $u$ and $v$.
\end{definition}

\begin{definition}
  Let $G$ be a graph, let $c$ be a coloring of $G$, and let $r\geq 2$
  and $d\geq 1$ be integers.  A coloring~$c'$ is a \emph{$(d,r)$-excellent
    refinement} of $c$ if for every vertex set $X\subseteq V(G)$ there
  exists an $r$-shortest path closure $X'$ of $X$ such that if $X$
  receives~at most $p$ colors in $c'$, then $X'$ receives at most
  $d\cdot p$ colors in $c$.
\end{definition}

\noindent We inductively define excellent refinements from good
refinements.

\begin{lemma}\label{lem:excellentrefinement}
  Let $G$ be a graph, let $k\geq 1,r\ge 2$ be integers, and let
  $d_r\coloneqq \prod_{2\le \ell\le r}2\wcol_\ell(G)$.  Then every
  coloring~$c$ of $G$ using at most $k$ colors has a
  $(d_r, r)$-excellent refinement coloring using at most $k^{d_r}$
  colors.
\end{lemma}
\begin{proof}
  The proof is by induction on $r$.  For $r=2$, an $r$-shortest path
  hitter of a set~$X$ is an $r$-shortest path closure, and vice versa.
  Hence, the statement immediately follows from
  \cref{lem:goodrefinement}.  Now assume $r\ge 3$.  By induction
  hypothesis, there is a $(d_{r-1}, r-1)$-excellent refinement~$c_1$
  of $c$ with at most $k^{d_{r-1}}$ colors.  By
  applying~\cref{lem:goodrefinement} to~$c_1$, we obtain a
  $(2\wcol_r(G),r)$-good refinement $c'$ of~$c_1$ with at most
  $(k^{d_{r-1}})^{2\wcol_r(G)}=k^{d_r}$ colors.  We claim that $c'$ is
  a $(d_r,r)$-excellent refinement of $c$. Any set $X$ which gets at
  most $p$ colors from $c'$ can be first extended to an $r$-shortest
  path hitter $X'$ for $X$ which receives at most $2\wcol_r(G)\cdot p$
  colors. Then $X'$ can be extended by induction hypothesis to an
  $(r-1)$-shortest path closure $X''$ of $X'$ that receives at most
  $d_{r-1}\cdot 2\wcol_r(G)\cdot p=d_r\cdot p$ colors.
   
  It remains to show that $X''$ is an $r$-shortest path closure of
  $X$.  Take any $u,v\in X$ with $\dist_G(u,v)=\ell\leq r$. If
  $\ell\leq 1$, then $u,v$ are already adjacent in $G[X]$.  Otherwise,
  since $X'$ is an $r$-shortest path hitter for $X$, there is a vertex
  $z\in X'$ that lies on some shortest path connecting $u$ and $v$ in
  $G$.  In particular, $\dist_G(u,z)=\ell_1$ and $\dist_G(z,v)=\ell_2$
  for $\ell_1,\ell_2$ satisfying $\ell_1,\ell_2<\ell$ and
  $\ell_1+\ell_2=\ell$.  Since $X''$ is an $(r-1)$-shortest path
  closure of $X'$, we infer that $\dist_{G[X'']}(u,z)=\ell_1$ and
  $\dist_{G[X'']}(z,v)=\ell_2$.  Hence $\dist_{G[X'']}(u,v)=\ell$ by
  the triangle inequality.
\end{proof}

\begin{proof}[of~\cref{thm:be-lrw}]
  Let $G\in \CCC$ and let
  $d_r:=\prod_{2\le \ell\le r}2\wcol_\ell(G)$.  Since~$\CCC$ has
  bounded expansion, by \cref{thm:charbddexp}, for each $r$,
  $\wcol_r(G)$ is bounded by a constant only depending on $\CCC$.  We
  start by taking $c$ to be a $(d_r\cdot p)$-tree-depth coloring with
  $N(d_r\cdot p)$ colors, where $N$ is the function from
  \cref{thm:be-lcw}.  Then its $(d_r,r)$-excellent refinement $c'$ has
  the property that $c'$ uses at most $N(d_r\cdot p)^{d_r}$ colors,
  and every subset $X$ which receives at most $p$ colors in $c'$ has
  an $r$-shortest path closure $X'$ that receives at most $d_r\cdot p$
  colors in $c$.  Thus, the graph induced by $X$ in the $r$th power
  $G^r$ is the same as the graph induced by $X$ in the $r$th power
  $G[X']^r$.  Since $G[X']$ has tree-depth at most $d_r\cdot p$,
  by~\cref{thm:tdtorw}, $G[X']^r$ has rank-width at most
  $2(r+1)^{d_r\cdot p+1}-2$.  Therefore, $G^r[X]$ has rank-width at
  most $2(r+1)^{d_r\cdot p+1}-2$ as well.
\end{proof}

We now give two example applications of \cref{thm:be-lrw}.  A
\emph{map graph} is a graph that can be obtained from a 
plane graph by
making a vertex for each face, and adding an edge between two
vertices, if the corresponding faces share a vertex. 
We use the following characterization of map graphs as {\em{half-squares of planar graphs}}, due to Chen et al.~\cite{ChenGP02}.

\begin{lemma}[\cite{ChenGP02}]\label{lem:half-squares}
  Every map graph is an induced subgraph of the second power of a
  planar graph.
\end{lemma}

As the class of planar graphs is a bounded expansion class, we conclude
that map graphs admit low rank-width colorings.  A similar reasoning
can be performed for line graphs of graphs from any bounded expansion
graph class.  Thus, both map graphs and line graphs of graphs from any
fixed bounded expansion graph class admit low rank-width colorings.

\begin{lemma}
  If $\CCC$ is a graph class of bounded expansion, then there is a
  graph class of bounded expansion $\CCC_1$ such that all line graphs
  of graphs from $\CCC$ are induced subgraphs of graphs from
  $\CCC_1^2$.
\end{lemma}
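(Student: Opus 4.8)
\emph{Proof plan.} I would take $\CCC_1$ to be the class of \emph{$1$-subdivisions} of graphs from $\CCC$. Concretely, for $G\in\CCC$ let $G'$ be the graph with vertex set $V(G)\cup\{x_e\colon e\in E(G)\}$ in which, for every edge $e=uv$ of $G$, the \emph{subdivision vertex} $x_e$ is adjacent exactly to $u$ and to $v$; equivalently, $G'$ is the vertex--edge incidence graph of $G$. There are two things to check: (i) the line graph $L(G)$ is isomorphic to the subgraph of $(G')^2$ induced by $\{x_e\colon e\in E(G)\}$, so that $L(G)$ is an induced subgraph of $(G')^2\in\CCC_1^2$; and (ii) the class $\CCC_1=\{G'\colon G\in\CCC\}$ has bounded expansion.

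For (i) note that in $G'$ both $V(G)$ and $\{x_e\colon e\in E(G)\}$ are independent sets, so distinct subdivision vertices are always at distance at least $2$. If $e\ne f$ share an endpoint $w$, then $x_e-w-x_f$ shows $\dist_{G'}(x_e,x_f)=2$. If $e$ and $f$ share no endpoint, then any $x_e$--$x_f$ path in $G'$ must leave $x_e$ through an endpoint of $e$ and enter $x_f$ through an endpoint of $f$, and since no two original vertices are adjacent in $G'$, such a path has length at least $4$, so $\dist_{G'}(x_e,x_f)>2$. Hence, in $(G')^2$, two subdivision vertices $x_e,x_f$ are adjacent precisely when $e$ and $f$ share an endpoint, i.e.\ precisely when $e$ and $f$ are adjacent in $L(G)$. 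This gives $L(G)\cong (G')^2[\{x_e\colon e\in E(G)\}]$.

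For (ii) I would argue via the weak coloring number characterisation of bounded expansion (\cref{thm:charbddexp}). Fix $r\ge 1$ and $G\in\CCC$, and let $L$ be a linear order of $V(G)$ with $\max_{v\in V(G)}|\WReach_r[G,L,v]|=\wcol_r(G)$. Extend $L$ to a linear order $\widehat L$ of $V(G')$ by putting all of $V(G)$ first in the order $L$, and then all subdivision vertices afterwards in an arbitrary order. Since every path in $G'$ alternates between original and subdivision vertices, and the subdivision vertices are the largest elements of $\widehat L$, no subdivision vertex other than the start vertex itself is ever weakly $r$-reachable; and projecting an alternating weakly $r$-reachable path in $G'$ onto its original vertices yields a weakly reachable path in $G$. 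One thus obtains $\WReach_r[G',\widehat L,w]\subseteq\WReach_r[G,L,w]$ for every original vertex $w$, and $\WReach_r[G',\widehat L,x_e]\subseteq\{x_e\}\cup\WReach_r[G,L,u]\cup\WReach_r[G,L,v]$ for every subdivision vertex $x_e$ with $e=uv$. Therefore $\wcol_r(G')\le 1+2\wcol_r(G)$. Since $\CCC$ has bounded expansion, \cref{thm:charbddexp} bounds $\wcol_r(G)$ over all $G\in\CCC$, hence bounds $\wcol_r(G')$ over all $G'\in\CCC_1$, and \cref{thm:charbddexp} again yields that $\CCC_1$ has bounded expansion. (Alternatively, one may simply invoke the folklore fact that the class of $1$-subdivisions of a bounded-expansion class has bounded expansion.)

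The distance computation in step (i) is routine; the only point requiring care is the weak-reachability bound in step (ii), namely checking that with the order $\widehat L$ a subdivision vertex can be weakly $r$-reachable only from itself, and that an alternating weakly $r$-reachable path projects cleanly onto a weakly reachable path in $G$. This is the main (though mild) obstacle.
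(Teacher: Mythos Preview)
Your proposal is correct and takes exactly the same approach as the paper: set $\CCC_1$ to be the class of $1$-subdivisions of graphs from $\CCC$, observe that $L(G)$ is the subgraph of $(G')^2$ induced by the subdivision vertices, and note that $\CCC_1$ has bounded expansion. The paper's proof simply asserts both facts in two sentences, whereas you supply full details (including the explicit weak-coloring-number bound $\wcol_r(G')\le 1+2\wcol_r(G)$), so your write-up is strictly more complete than the original.
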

\begin{proof}
  Observe that for any graph $G$, if by $G_1$ we denote 
  $G$ with every
  edge subdivided once, then the line graph of $G$ is a subgraph of
  $G_1^2$ induced by the subdividing vertices.  It follows that we may
  take $\CCC_1$ to be the class of all $1$-subdivisions of graphs from
  $\CCC$. This class also has bounded expansion.
\end{proof}


\section{Negative results}\label{sec:negative-results}

\noindent In contrast to the result in \cref{sec:other-positive}, we prove that
interval graphs and permutation graphs do not admit low rank-width
colorings. For this, we introduce twisted chain graphs.  Briefly, a
twisted chain graph $G$ consists of three vertex sets $A, B, C$ where
each of $G[A\cup C]$ and $G[B\cup C]$ is a chain graph, but the
ordering of $C$ with respect to the chain graphs $G[A\cup C]$ and
$G[B\cup C]$ are distinct. See Figure~\ref{fig:twistedchain} for an
illustration.

\begin{figure}
 \tikzstyle{v}=[circle, draw, solid, fill=black, inner sep=0pt, minimum width=3pt]
 \tikzstyle{w}=[rectangle, draw, solid, fill=black, inner sep=0pt, minimum width=5pt, minimum height=5pt]
  \centering
   \begin{tikzpicture}[scale=0.6]

        \node [v]  (v1) at (-.5, 0-1){};
        \node [v]  (v2) at (-.5, 2-1){};
        \node [v]  (v3) at (-.5, 4+1){};
        \node [v]  (v4) at (-.5, 6+1){};

        \node [v]  (w1) at (6.5, 0-1){};
        \node [v]  (w2) at (6.5, 2-1){};
        \node [v]  (w3) at (6.5, 4+1){};
        \node [v]  (w4) at (6.5, 6+1){};

        \node [v]  (z1) at (2, 2){};
        \node [v]  (z2) at (4, 2){};
        \node [v]  (z3) at (2, 4){};
        \node [v]  (z4) at (4, 4){};

	    \draw (v1) node [left] {$v_1$}; 
	    \draw (v2) node [left] {$v_2$}; 
	    \draw (v3) node [left] {$v_3$}; 
	    \draw (v4) node [left] {$v_4$}; 

	    \draw (w1) node [right] {$w_1$}; 
	    \draw (w2) node [right] {$w_2$}; 
	    \draw (w3) node [right] {$w_3$}; 
	    \draw (w4) node [right] {$w_4$}; 

	    \draw (z1) node [below] {$z_{1,1}$}; 
	    \draw (z2) node [below] {$z_{1,2}$}; 
	    \draw (z3) node [above] {$z_{2,1}$}; 
	    \draw (z4) node [above] {$z_{2,2}$}; 
	
		\draw(v1)--(z1);
		\draw(v1)--(z2);
		\draw(v1)--(z3);
		\draw(v1)--(z4);
		\draw(v2)--(z2);
		\draw(v2)--(z3);
		\draw(v2)--(z4);
		\draw(v3)--(z3);
		\draw(v3)--(z4);
		\draw(v4)--(z4);

		\draw(w1)--(z1);
		\draw(w1)--(z2);
		\draw(w1)--(z3);
		\draw(w1)--(z4);
		\draw(w2)--(z2);
		\draw(w2)--(z3);
		\draw(w2)--(z4);
		\draw(w3)--(z2);
		\draw(w3)--(z4);
		\draw(w4)--(z4);

    
\end{tikzpicture}
\caption{A twisted chain graph of order $2$. For instance,
  $3=2(2-1)+1=n(x-1)+y$ and $v_3$ is adjacent to $z_{2,1}$ and
  $z_{2,2}$, but not adjacent to $z_{1,1}$ and $z_{1,2}$.}
  \label{fig:twistedchain}
\end{figure}

\begin{definition}\label{def:twi}
  For $n\in \N$, a graph on the set of $3n^2$ vertices
  $A\cup B\cup C$, where $A=\{v_1, \ldots, v_{n^2}\}$,
  $B=\{w_1, \ldots, w_{n^2}\}$, and
  $C=\{z_{(i,j)}\colon 1\le i,j\le n\}$, is called a \emph{twisted
    chain graph} of order $n$ if
  \begin{itemize}
  \item for integers $x,y, i, j\in \{1, \ldots, n\}$ and $k=n(x-1)+y$,
    $v_k$ is adjacent to $z_{(i,j)}$ if and only if either ($x<i$) or
    ($x=i$ and $y\le j$);
  \item for integers $x,y, i, j\in \{1, \ldots, n\}$ and $k=n(x-1)+y$,
    $w_k$ is adjacent to $z_{(i,j)}$ if and only if either ($x<j$) or
    ($x=j$ and $y\le i$);
  \item the edge relation within $A\cup B$ and within $C$ is
    arbitrary.
  \end{itemize}
\end{definition}

\noindent We first show that a large twisted chain graph has large
rank-width.  We remark that a similar construction based on merging
two chain graphs in a mixed order can be found in Brandst\"adt et
al~\cite{Brandstadt2006}. Also, a slightly more general construction
was given by Dabrowski and Paulusma~\cite{DabrowskiP16}. Obtaining any
lower bound seems to follow from a careful examination and
modification of the constructions given in \cite{Brandstadt2006} or
\cite{DabrowskiP16}; however, we prefer to give our own direct proof
for the sake of completeness. Also, in those papers, the authors
provided a lower bound of clique-width, and its direct application to
rank-width does not provide a linear lower bound.

\begin{lemma}\label{lem:twistedgrid-rk}
  For every integer $n>0$, every twisted chain graph of order $12n$
  has rank-width at least~$n$.
\end{lemma}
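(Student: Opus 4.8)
The plan is to show that any twisted chain graph $G$ of order $12n$ contains a "large enough" witness of complexity forcing rank-width $\geq n$, by exhibiting an equipartition of a subset of $V(G)$ on which the cut-rank function is large, and then using the standard fact that rank-width controls balanced cuts. Concretely, recall that in any rank-decomposition of width $k$, one can find an edge whose removal splits the leaves (hence $V(G)$) into two parts, each containing at least a third of the vertices, and the cut-rank across that bipartition is at most $k$. So it suffices to prove that $G$ has \emph{no} balanced separation of small cut-rank: for every bipartition $(X, V(G)\setminus X)$ with both sides reasonably large, $\cutrk_G(X)$ is at least $n$. The factor $12$ in the hypothesis is the slack we need to absorb the loss from "balanced but not perfectly balanced" and from restricting attention to the $C$-part.

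First I would focus on the set $C = \{z_{(i,j)} : 1 \le i,j \le n'\}$ with $n' = 12n$, which we think of as an $n' \times n'$ grid, together with $A$ indexed by pairs $(x,y)$ ordered lexicographically and likewise $B$. The key observation is the \emph{twist}: $v_k$ (with $k \leftrightarrow (x,y)$) sees exactly the $z_{(i,j)}$ with $(i,j)$ lexicographically $\geq (x,y)$ reading coordinates in the order $(i,j)$, while $w_k$ sees those with $(i,j)$ lexicographically $\geq (x,y)$ reading coordinates in the \emph{swapped} order $(j,i)$. Thus the $A$-vertices "measure" the grid by rows-then-columns and the $B$-vertices measure it by columns-then-rows. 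Given any bipartition $(X, \bar X)$, consider how $C$ is split: $C \cap X$ and $C \cap \bar X$. Since $|C| = (12n)^2$ and the whole graph has $3(12n)^2$ vertices, if $(X,\bar X)$ is a $\tfrac13$-balanced bipartition then a counting argument shows that $C$ itself is split into two parts each of size $\Omega((12n)^2)$ — here is where I'd be careful with constants, but the point is each side gets a constant fraction of the grid cells.

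The heart of the argument is then a combinatorial lemma: if the grid $[n']^2$ is partitioned into $P$ and $Q$ with both $|P|, |Q|$ a constant fraction of $n'^2$, then either there are $\geq n$ rows each of which is "genuinely mixed" in a strong sense (contains a cell of $P$ strictly to the right of a cell of $Q$, say, at well-separated positions), or the analogous statement holds for columns. In the first case I extract $n$ distinct $A$-vertices whose neighborhoods into $C\cap X$ versus $C\cap \bar X$ exhibit a "staircase" / identity-submatrix pattern: picking one well-chosen $v_k$ per mixed row so that $v_k$ separates a $P$-cell from a $Q$-cell in that row, the resulting $n \times n$ submatrix of $A_G[X, \bar X]$ (rows = the chosen $A$-vertices, columns = appropriately chosen $C$-cells) is lower- or upper-triangular with nonzero diagonal over $\mathbb{F}_2$, hence has rank $n$; in the second case we do the same with $B$-vertices and columns. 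Either way $\cutrk_G(X) \ge n$. The triangular structure is exactly what the threshold (chain-graph) adjacency of $v_k$ and $w_k$ to $C$ gives us, and the twist guarantees that the row-based and column-based measurements cannot both be "trivialized" by the same bipartition.

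The main obstacle I anticipate is the combinatorial lemma on grid partitions: one must argue that a balanced partition of $[n']^2$ cannot simultaneously have all rows "nearly monochromatic in a lex-compatible way" and all columns likewise — intuitively obvious from the twist, but making it quantitative enough to survive the imperfect balance (only a $\tfrac13$-ish split, and only a constant fraction of $C$ on each side) is where the factor $12$ gets spent and where the bookkeeping is genuinely delicate. A clean way to organize it: define for each row the "$P$/$Q$ interface position" and argue that if fewer than $n$ rows are mixed with well-separated interface, then $P$ restricted to the non-mixed rows is (up to few cells) a union of lex-intervals, forcing $P$ to be lex-ordered-ish globally along the $A$-direction; symmetrically $Q$ along the $B$-direction; and a set that is simultaneously a near-initial-segment in the $(i,j)$ lex order and whose complement is a near-initial-segment in the $(j,i)$ lex order must be tiny or cotiny, contradicting balance. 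I would also double-check the reduction from rank-width to balanced cut-rank (the "find an edge giving a $\tfrac13$-$\tfrac23$ split" step) applies with the tree having leaves in bijection with $V(G)$ and is insensitive to the arbitrary edges inside $A\cup B$ and inside $C$, which it is, since $\cutrk_G$ only cares about the $A\cup B$-to-$C$ adjacency we control when we choose the submatrix.
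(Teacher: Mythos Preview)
Your plan is essentially the paper's proof: obtain a cut balanced with respect to $C$, show that many mixed rows yield an upper-triangular submatrix via $A$-vertices (and symmetrically many mixed columns via $B$-vertices), and derive a contradiction from having few mixed rows and few mixed columns simultaneously. Two points need tightening.

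First, your reduction to a balanced split of $C$ is incorrect as stated. A $\tfrac13$-balanced bipartition of $V(G)$ does \emph{not} force $C$ to be split nontrivially: since $|A\cup B|=\tfrac23|V(G)|$, the bipartition $X=A\cup B$, $\bar X=C$ is $\tfrac13$-balanced yet puts all of $C$ on one side, so no ``counting argument'' saves this. The fix (which the paper uses) is to apply the balanced-separator lemma directly with respect to the subset $C$: the standard edge-walking argument in a rank-decomposition yields a bipartition $(S,T)$ with $\cutrk_G(S)\le\rw(G)$ and $|C\cap S|,|C\cap T|\ge |C|/3$. Second, when you pick one $v_k$ per mixed row you have no control over which side of the cut $v_k$ lands on, so the submatrix you describe is not a priori a submatrix of $A_G[X,\bar X]$. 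The paper handles this by pairing each chosen $v_k$ with whichever of the two available $C$-cells lies on the \emph{opposite} side of the cut, and then pigeonholing so that at least half the pairs are oriented the same way; this costs a factor of $2$ (and another factor of $2$ for the analogous choice among the $C$-cells), which is where part of the slack in the constant $12$ is spent.
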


Before we proceed to the proof of \cref{lem:twistedgrid-rk}, we need
to introduce some basic tools. A vertex bipartition $(X,Y)$ of a graph
$G$ is \emph{balanced with respect to a set $C\subseteq V(G)$} if
$\frac{|C|}{3}\le |X\cap C|, |Y\cap C|$.  We will need the following
standard fact.

\begin{lemma}\label{lem:balancedpartition}
  If $G$ is a graph of rank-width at most $w$ and $C\subseteq V(G)$
  with $|C|\geq 3$, then $G$ admits a vertex bipartition $(X,Y)$ with
  $\cutrk_G(X)\le w$ that is balanced with respect to $C$.
\end{lemma}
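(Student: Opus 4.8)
The plan is to use a standard centroid/balanced-separator argument on the subcubic tree of an optimal rank-decomposition. Fix a rank-decomposition $(T,L)$ of $G$ of width at most $w$. Each edge $e$ of $T$ splits the leaf set into two parts and hence induces a bipartition $(A^e_1,A^e_2)$ of $V(G)$ with $\cutrk_G(A^e_1)\le w$; moreover orienting $e$ towards one of its sides lets us speak of the number of elements of $C$ ``below'' $e$. The goal is to find an orientation of some edge $e$ (or, if needed, a node) of $T$ such that both sides contain at least $|C|/3$ vertices of $C$.

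First I would set up the weighting: give each leaf of $T$ weight $1$ if the corresponding vertex of $G$ lies in $C$ and weight $0$ otherwise, so the total weight is $|C|$. Then I would invoke the classical fact that a tree with nonnegatively weighted leaves has a ``centroid'' edge or node: concretely, orient every edge of $T$ towards the side carrying strictly more than half of the total weight (breaking the tie towards larger weight if exactly half, and leaving such an edge unoriented). A standard argument shows the oriented edges cannot form a directed cycle and in fact all point towards a common node $t$ of $T$ (or towards a common edge if some edge is a perfect halver). Now split into cases according to the degree of $t$ and how the weight distributes among the components of $T-t$.

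The key computation is the case analysis at the centroid. Since $T$ is subcubic, removing a degree-$3$ node $t$ leaves three subtrees $T_1,T_2,T_3$ with weights $c_1,c_2,c_3$ summing to $|C|$ (the leaf $t$ itself, if $t$ is a leaf, contributes at most $1$; handle the small cases $|C|\in\{3,4,\dots\}$ directly or absorb into one part), and by the orientation property each $c_i\le |C|/2$. Relabel so $c_1\ge c_2\ge c_3$. If $c_1\ge |C|/3$, take $X=A^{e_1}$ to be the vertex set on the $T_1$-side of the edge $e_1$ joining $t$ to $T_1$; then $|X\cap C|=c_1\ge|C|/3$ and $|Y\cap C|=|C|-c_1\ge|C|/2\ge|C|/3$, and $\cutrk_G(X)\le w$ since $X$ is one side of an edge bipartition. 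If instead $c_1<|C|/3$, then grouping gives $c_1+c_2\ge \tfrac23|C|$ would overshoot, but here $c_1+c_2 = |C|-c_3 \ge |C|-c_1 > \tfrac23|C|$, and also $c_1+c_2\le |C|$, so one checks $\tfrac13|C|\le c_1+c_2$; taking $X$ to be the union of the vertex sets on the $T_1$- and $T_2$-sides works, but since $T_1\cup\{t\}\cup T_2$ need not correspond to a single edge cut I would instead simply observe $c_3 = |C|-(c_1+c_2)$ and note $c_1 \ge c_2 \ge c_3$ forces $c_3\le |C|/3 \le c_1+c_2$; then the edge $e_3$ between $t$ and $T_3$ gives $X=A^{e_3}$ on the non-$T_3$ side with $|X\cap C|=c_1+c_2\ge|C|/3$ and $|Y\cap C|=c_3\le|C|/3$, so $|Y\cap C|\ge |C|/3$ needs $c_3\ge|C|/3$; combined with the previous case ($c_1\ge|C|/3$) this covers everything because $c_1\ge|C|/3$ always holds whenever $c_3<|C|/3$ is not already giving a balanced cut — so in fact the clean statement is: since $c_1\ge c_2\ge c_3$ and $c_1+c_2+c_3=|C|$, we always have $c_1\ge |C|/3$, and then the edge $e_1$ between $t$ and $T_1$ yields a cut with $|C|/3\le c_1$ on one side and $|C|-c_1\ge c_2+c_3\ge$ ... which is $\ge |C|/3$ exactly when $c_1\le 2|C|/3$, guaranteed by the centroid property $c_1\le|C|/2$. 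So $e=e_1$ works, and $X=A^{e_1}$ is the required bipartition with $\cutrk_G(X)\le w$.

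The main obstacle is purely bookkeeping: making the centroid argument airtight when $T$ has very few nodes (so that the relevant ``subtree'' might be a single leaf) and when the weight is concentrated on one leaf — i.e., the small-$|C|$ corner cases, which is exactly why the hypothesis $|C|\ge 3$ is imposed. I would dispatch these by noting that if $|C|\ge 3$ then $T$ has at least $3$ leaves hence a degree-$3$ node, and the worst case $c_1=|C|-1$, $c_2=1$, $c_3=0$ (one heavy subtree) still satisfies $c_1\le|C|/2$ only when $|C|=2$, which is excluded, so after the centroid reduction the heavy side always has $\le |C|/2 \le 2|C|/3$ weight and the argument goes through. Everything else — that one side of an edge of a rank-decomposition has cut-rank at most the width, and that $|X\cap C|$ equals the leaf-weight of that side — is immediate from the definitions in \cref{sec:prelim}.
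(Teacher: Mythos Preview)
Your approach is correct and amounts to the same standard balanced-separator-in-a-tree argument as the paper; both locate an edge of an optimal rank-decomposition $(T,L)$ whose two sides each carry between $|C|/3$ and $2|C|/3$ of the weight. The paper's execution is slightly different and cleaner: rather than orienting edges and arguing about a centroid node, it subdivides an edge to create a root, defines $\mu(t)$ as the number of $C$-leaves below $t$, and simply takes the farthest-from-root node $t$ with $\mu(t)\ge |C|/3$; then each child $t'$ has $\mu(t')<|C|/3$, so (using that $t$ has at most two children, or is a leaf with $\mu(t)\le 1\le 2|C|/3$ since $|C|\ge 3$) one gets $\mu(t)\le 2|C|/3$, and the parent edge of $t$ gives the cut. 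Your centroid version reaches the same edge via more bookkeeping; the convoluted back-and-forth case analysis in your middle paragraph is unnecessary once you observe, as you eventually do, that $c_1\ge c_2\ge c_3$ with $c_1+c_2+c_3=|C|$ forces $c_1\ge |C|/3$, and the centroid property forces $c_1\le |C|/2$, so the edge towards $T_1$ works directly.
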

\begin{proof}
  Let $(T,L)$ be a rank-decomposition of $G$ of width at most $w$.  We
  subdivide an edge of $T$, and regard the new vertex as a root node.
  For each node $t\in V(T)$, let $\mu(t)$ be the number of leaves of
  $T$ that are descendants of $t$ and correspond to vertices of $C$.
  Now, we choose a node $t$ that is farthest from the root node and
  such that $\mu(t)\geq \frac{|C|}{3}$.  By the choice of $t$, either
  $t$ is a leaf or for each child $t'$ of $t$ we have
  $\mu(t')<\frac{|C|}{3}$.  Therefore, since $|C|\geq 3$, in any case
  $\frac{|C|}{3}\leq \mu(t)< \frac{2|C|}{3}$.  Let~$e$ be the edge
  connecting the node $t$ and its parent.  By the construction, the
  vertex bipartition associated with $e$ satisfies the required
  property.
\end{proof}

We now proceed to the proof of \cref{lem:twistedgrid-rk}.

\begin{proof}[of \cref{lem:twistedgrid-rk}]
  Let $m:=12n$ and let $G$ be a twisted chain graph of order $m$.
  Adopt the notation from Definition~\ref{def:twi} for $G$.  Suppose
  for the sake of contradiction that the rank-width of $G$ is smaller
  than $n$.  By \cref{lem:balancedpartition}, there exists a vertex
  bipartition $(S,T)$ of $G$ with $\cutrk_G(S)<n$ such that
  $|C\cap S|\geq \frac{|C|}{3}=m^2/3$ and similarly
  $|C\cap T|\geq m^2/3$.

  Suppose we have vertices $v_{a_1},\ldots,v_{a_k}\in A\cap S$ and
  $z_{(b_1,c_1)},\ldots,z_{(b_k,c_k)}\in C\cap T$ with the following
  property satisfied:
  $$a_1\leq(b_1-1)m+c_1< a_2\le (b_2-1)m+c_2< \cdots <a_k\le (b_k-1)m+c_k.$$
  Such a structure will be called an {\em{$A$-ordered
      $(S,T)$-matching}} of order $k$.  By the definition of adjacency
  in $G$ it follows that the submatrix of $A_G[S,T]$ induced by rows
  corresponding to vertices~$v_{a_i}$ and columns corresponding to
  vertices~$z_{(b_i,c_i)}$ has ones in the upper triangle and on the
  diagonal, and zeroes in the lower triangle. The rank of this
  submatrix is $k$, so since $\cutrk_G(A)<n$, there is no $A$-ordered
  $(S,T)$-matching of order $n$.  We similarly define the $A$-ordered
  $(T,S)$-matching of rank $n$, where the vertices $v_{a_i}$ belong to
  $T$ and vertices $z_{(b_i,c_i)}$ belong to $S$.  Likewise, there is
  no $A$-ordered $(T,S)$-matching of order $n$.

  Suppose now we have vertices $w_{a_1},\ldots,w_{a_k}\in B\cap S$ and
  $z_{(b_1,c_1)},\ldots,z_{(b_k,c_k)}\in C\cap T$ with the following
  property satisfied:
  $$a_1\leq(c_1-1)m+b_1< a_2\le (c_2-1)m+b_2< \cdots <a_k\le (c_k-1)m+b_k.$$
  Such a structure will be called a {\em{$B$-ordered
      $(S,T)$-matching}} of order $k$, and a {\em{$B$-ordered
      $(T,S)$-matching}} of order $k$ is defined analogously.  Again,
  the same reasoning as above shows that there is no $B$-ordered
  $(S,T)$-matching of order $n$, and no $B$-ordered $(T,S)$-matching
  of order $n$.

  Let $\preceq_1$ and $\preceq_2$ be lexicographic orders on
  $\{1,\ldots,m\}\times\{1,\ldots,m\}$, with the leading coordinate
  being the first one for $\preceq_1$ and the second for $\preceq_2$.

\pagebreak
\begin{claim}\label{cl:A-side}
  If there is a sequence
  $(x_1,y_1)\prec_1 (x_2,y_2)\prec_1\ldots \prec_1 (x_{4k},y_{4k})$
  such that $z_{(x_j,y_j)}\in S$ for odd $j$ and $z_{(x_j,y_j)}\in T$
  for even $j$, then there is either an $A$-ordered $(S,T)$-matching
  of order $k$, or an $A$-ordered $(T,S)$-matching of order~$k$.
\end{claim}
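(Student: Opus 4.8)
The plan is to extract a long monotone subsequence from the given $\prec_1$-chain by focusing on the first coordinates $x_1 \le x_2 \le \ldots \le x_{4k}$ (they are non-decreasing since the sequence is $\prec_1$-increasing). First I would pass to a subsequence of length $2k$ in which consecutive indices have strictly increasing first coordinate, or else one finds a long block with a common first coordinate; in the latter case the sequence is forced to be $\prec_2$-monotone within that block, which will turn out to be even easier. More carefully: among the $4k$ points, colored alternately $S$ and $T$, I would look at consecutive \emph{pairs} $(z_{(x_{2j-1},y_{2j-1})}, z_{(x_{2j},y_{2j})})$ for $j=1,\ldots,2k$; the first point of each pair is in $S$, the second in $T$. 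For each such pair, since $(x_{2j-1},y_{2j-1})\prec_1(x_{2j},y_{2j})$, either $x_{2j-1}<x_{2j}$, or $x_{2j-1}=x_{2j}$ and $y_{2j-1}<y_{2j}$.

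The key idea is that a pair $z_{(b,c)}\in S$, $z_{(b',c')}\in T$ with $(b,c)\prec_1(b',c')$ gives us a ``slot'' in which to place one row of an $A$-ordered matching, because the relevant column index $(b-1)m+c$ is strictly less than $(b'-1)m+c'$, and we can hope to find a vertex $v_a\in A$ with $a$ sitting between consecutive such column indices. Concretely, I would arrange the $2k$ pairs so that their column-indices $(b_j-1)m+c_j$ (for the $S$-endpoint) and $(b'_j-1)m+c'_j$ (for the $T$-endpoint) interleave along $\{1,\ldots,m^2\}$, and then the integer interval strictly between the $T$-index of pair $j$ and the $S$-index of pair $j+1$ is nonempty (here I use that the points are $\prec_1$-increasing, so these indices are genuinely increasing, and that there is enough ``room'': each gap has size at least $1$). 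Picking any integer $a_j$ in that gap and letting $v_{a_j}$ be the corresponding $A$-vertex, together with the $S$- or $T$-vertex chosen to be on the appropriate side, yields the required chain of strict inequalities $a_1 \le (b_1-1)m+c_1 < a_2 \le \cdots$, which is exactly an $A$-ordered matching; whether it is an $(S,T)$- or $(T,S)$-matching depends on which endpoint of each pair we keep, and since we have $2k$ pairs, by pigeonhole at least $k$ of them contribute to the same type.

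The main obstacle I anticipate is the bookkeeping that shows the gaps between consecutive column-indices are actually nonempty and that one can consistently choose the $A$-vertices — in particular handling the case where many consecutive points share a first coordinate $x$, where the ``column index'' collapses and one must instead argue via the second lexicographic order (this is presumably where a companion claim for the $B$-side and a symmetric argument come in). I would also need to be slightly careful that the factor $4k$ (rather than $2k$) in the hypothesis is exactly what is needed to absorb both the pigeonhole between $(S,T)$ and $(T,S)$ types and the loss from pairing up consecutive points; a cleaner way may be to split the $4k$ points into $2k$ consecutive $S\!-\!T$ pairs as above, keep the $2k$ $T$-endpoints (say) as the column-vertices $z_{(b_j,c_j)}$, and insert $A$-vertices in the $2k-1$ gaps between them plus one before the first, again splitting by type. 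Once the indices are lined up correctly, the rank computation is exactly the triangular-submatrix argument already used in the proof of \cref{lem:twistedgrid-rk}, so no new linear algebra is needed.
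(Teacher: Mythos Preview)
Your overall plan---group the $4k$ points into $2k$ consecutive $(S,T)$ pairs, attach one $A$-vertex to each pair, then pigeonhole on whether that $A$-vertex lies in $S$ or in $T$---is exactly the structure of the paper's argument. But the step where you ``pick any integer $a_j$ in the gap strictly between the $T$-index of pair $j$ and the $S$-index of pair $j{+}1$'' does not go through: writing $r_j=(x_j-1)m+y_j$, these gaps are the open integer intervals $(r_{2j},r_{2j+1})$, and since the $r_j$ are merely a strictly increasing sequence of integers, such an interval may well be empty (whenever $r_{2j+1}=r_{2j}+1$). Your alternative ``cleaner way'' of keeping only the $T$-endpoints has a related defect: if $v_{a_j}\in T$ you would need a $z$-vertex in $S$ to pair it with, and you have discarded those.

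The fix is simpler than the workarounds you anticipate: no gap is needed at all. Just set $a_i=r_{2i-1}$ for $i=1,\ldots,2k$. The vertex $v_{a_i}\in A$ lies in $S$ or in $T$; pair it with $z_{(x_{2i},y_{2i})}\in T$ in the first case and with $z_{(x_{2i-1},y_{2i-1})}\in S$ in the second. Either way the chosen $z$-vertex has column index in $\{r_{2i-1},r_{2i}\}$, so
\[
a_i=r_{2i-1}\le (b_i-1)m+c_i\le r_{2i}<r_{2i+1}=a_{i+1},
\]
and the required chain of inequalities holds automatically. Your final pigeonhole then gives an $A$-ordered $(S,T)$- or $(T,S)$-matching of order $k$ among the $2k$ pairs. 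This is precisely the paper's proof.

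Two of your anticipated obstacles are red herrings. First, the map $(x,y)\mapsto (x-1)m+y$ is a bijection from $\{1,\dots,m\}^2$ to $\{1,\dots,m^2\}$ that carries $\prec_1$ to the usual order, so column indices never ``collapse'' when many points share a first coordinate; no separate case is needed. Second, the $B$-side claim is an independent, symmetric statement about $\prec_2$-chains and plays no role in proving the present claim.
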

\begin{clproof}
Denote $r_j=(x_j-1)m+y_j$ for $j=1,2,\ldots,4k$.
For $i=1,2,\ldots,2k$, we define $a_i$ and $(b_i,c_i)$ as follows:
\begin{itemize}
\item if $v_{r_{2i-1}}\in S$, then $a_i=r_{2i-1}$ and
  $(b_i,c_i)=(x_{2i}, y_{2i})$, and
\item if $v_{r_{2i-1}}\in T$, then $a_i=r_{2i-1}$ and
  $(b_i,c_i)=(x_{2i-1}, y_{2i-1})$.
\end{itemize}
It follows that vertices $v_{a_1},\ldots,v_{a_{2k}}$ and
$z_{(b_1,c_1)},\ldots,z_{(b_{2k},c_{2k})}$ satisfy
$$a_1\leq(b_1-1)m+c_1< a_2\le (b_2-1)m+c_2< \cdots <a_k\le (b_k-1)m+c_k,$$
and for each $i$ we have that $v_{a_i}$ and $z_{(b_i,c_i)}$ belong to
different sides of the bipartition $(S,T)$.  Now, if for at least $k$
indices $i$ we have $v_{a_i}\in S$ and $z_{(b_i,c_i)}\in T$, then the
subsequence induced by elements with these indices gives an
$A$-ordered $(S,T)$-matching of order $k$.  Otherwise, there are at
least $k$ indices $i$ with $v_{a_i}\in T$ and $z_{(b_i,c_i)}\in S$,
and the subsequence induced by them gives an $A$-ordered
$(T,S)$-matching of order $k$.
\end{clproof}

A symmetric proof gives the following.

\begin{claim}\label{cl:B-side}
  If there is a sequence
  $(x_1,y_1)\prec_2 (x_2,y_2)\prec_2\ldots \prec_2 (x_{4k},y_{4k})$
  such that $z_{(x_j,y_j)}\in S$ for odd $j$ and $z_{(x_j,y_j)}\in T$
  for even $j$, then there is either a $B$-ordered $(S,T)$-matching of
  order $k$, or a $B$-ordered $(T,S)$-matching of order~$k$.
\end{claim}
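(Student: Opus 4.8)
The plan is to mirror the proof of \cref{cl:A-side} essentially verbatim, swapping the roles of $A$ and $B$, of $\prec_1$ and $\prec_2$, and of the two indexings of $C$. The one genuine point I would verify up front is that the $B$-side adjacency is governed by the ``transposed'' index. Concretely, for a vertex $w_a\in B$ with $a=m(x-1)+y$ and a vertex $z_{(b,c)}\in C$, I would check directly from \cref{def:twi} that $w_a$ is adjacent to $z_{(b,c)}$ if and only if $a\le (c-1)m+b$: if $x<c$ the inequality and the adjacency both hold, if $x=c$ both reduce to $y\le b$, and if $x>c$ both fail. This is the crux: the interchange of the two coordinates of $z_{(i,j)}$ in the $B$-adjacency rule (compared with the $A$-rule) is exactly compensated by indexing $z_{(b,c)}$ by $(c-1)m+b$, which is also the quantity that is monotone under $\prec_2$.

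With this in hand I would set $r_j:=(y_j-1)m+x_j$ for $j=1,\ldots,4k$; since the given sequence is $\prec_2$-increasing, the $r_j$ are strictly increasing, and $r_j$ is precisely the index of $z_{(x_j,y_j)}$. For $i=1,\ldots,2k$ I define $a_i:=r_{2i-1}$ and choose a partner for $w_{a_i}$ according to its side of the bipartition: if $w_{a_i}\in S$, I pair it with $z_{(x_{2i},y_{2i})}\in T$; if $w_{a_i}\in T$, I pair it with $z_{(x_{2i-1},y_{2i-1})}\in S$. In both cases the adjacency inequality $a_i\le(\text{index of the chosen }z)$ holds---with equality in the second case, and because $r_{2i-1}<r_{2i}$ in the first---so each pair is an edge crossing $(S,T)$, with $w_{a_i}$ and its partner on opposite sides.

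It then remains to verify the interleaving chain condition and apply pigeonhole. Writing $s_i$ for the index of the $z$-vertex paired with $w_{a_i}$, one has $s_i\in\{r_{2i-1},r_{2i}\}$ and $a_i=r_{2i-1}$, so $a_i\le s_i\le r_{2i}<r_{2i+1}=a_{i+1}$; hence the $2k$ pairs satisfy $a_1\le s_1<a_2\le s_2<\cdots<a_{2k}\le s_{2k}$, which is exactly the defining chain of a $B$-ordered matching of order $2k$ in which every pair crosses the bipartition. Finally, among these $2k$ pairs at least $k$ are of the same type---either $w\in S$ with $z\in T$, or $w\in T$ with $z\in S$---and restricting to that subsequence (which stays increasing, so the chain condition survives) produces a $B$-ordered $(S,T)$-matching or a $B$-ordered $(T,S)$-matching of order $k$, as required. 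I expect no real obstacle beyond the initial adjacency computation: once the transposed index $(c-1)m+b$ and its compatibility with $\prec_2$ are confirmed, the argument is a line-by-line transcription of the $A$-side case.
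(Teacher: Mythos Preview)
Your proposal is correct and is precisely the symmetric argument the paper alludes to (the paper only writes ``A symmetric proof gives the following'' for \cref{cl:B-side}). Your explicit verification that $w_a$ is adjacent to $z_{(b,c)}$ iff $a\le (c-1)m+b$, and hence that the transposed index $(c-1)m+b$ plays the role under $\prec_2$ that $(b-1)m+c$ plays under $\prec_1$, is exactly the point that makes the transcription go through; the rest is a line-by-line copy of the \cref{cl:A-side} proof.
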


From \cref{cl:A-side} and \cref{cl:B-side} it follows that the largest
possible length of sequences as in the statements is smaller than
$4n$.  For $i,j\in \{1,\ldots,m\}$, call the sets
$\{z_{(i,y)}\colon y\in \{1,\ldots,m\}\}$ and
$\{z_{(x,j)}\colon x\in \{1,\ldots,m\}\}$ the {\em{$i$th row}} and the
{\em{$j$th column}}, respectively.  A row or a column is {\em{mixed}}
if it contains both elements of~$S$ and elements of $T$.  Observe that
if there were at least $4n$ mixed rows, then by choosing vertices from
$S$ and $T$ alternately from these rows we would obtain a sequence of
length $4n$ as in the statement of \cref{cl:A-side}. Then
\cref{cl:A-side} gives us an $A$-ordered $(S,T)$-matching of order $n$
or an $A$-ordered $(T,S)$-matching of order~$n$, a contradiction.
Hence, there are less than $4n$ mixed rows, and symmetrically there
are less than $4n$ mixed columns.

Observe that if there were two non-mixed rows such that the first one
was contained in $S$ while the second was contained in $T$, then all
the $12n$ columns would be mixed, a contradiction.  Hence, either all
the non-mixed rows belong to~$S$, or all of them belong to
$T$. However, there are more than $12n-4n=8n=2m/3$ non-mixed rows, so
either $S$ or $T$ contains less than a third of vertices of $C$.  This
is a contradiction with the assumption that $(S,T)$ is balanced with
respect to~$C$.
\end{proof}

We now observe that if a graph class contains arbitrarily
large twisted chain graphs, then it does not admit low rank-width
colorings.

\begin{theorem}\label{thm:interval-permut}
  Let $\CCC$ be a hereditary graph class, and suppose for each
  positive integer $n$ some twisted chain graph of order $n$ belongs
  to $\CCC$.  Then $\CCC$ does not admit low rank-width colorings.
\end{theorem}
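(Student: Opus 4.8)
The plan is to derive a contradiction from the assumption that $\CCC$ admits low rank-width colorings, witnessed by functions $N$ and $Q$. Fix $p = 2$, so that every graph in $\CCC$ has a vertex coloring with $N(2)$ colors in which the union of any two color classes induces a subgraph of rank-width at most $Q(2)$. Set $w := Q(2)$, and choose $n$ large enough that $12n > w$ (so \cref{lem:twistedgrid-rk} will apply), and additionally large enough to beat the Ramsey-type bound described below. By hypothesis there is a twisted chain graph $G$ of order $n$ in $\CCC$; adopt the notation $A \cup B \cup C$ from \cref{def:twi}, with $|C| = n^2$.

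The key point is that the set $C$ of $n^2$ vertices is colored by only $N(2)$ colors under the given coloring of $G$, so by pigeonhole there is a single color class meeting $C$ in at least $n^2 / N(2)$ vertices. More usefully, I want to find two color classes whose union, restricted to $C$, contains a large twisted-chain-like pattern. In fact it suffices to take the most popular color on $C$, call it the class $D$, and consider $C' := C \cap D$ — no, more carefully: I want the subgraph induced by the union of two color classes to itself contain a twisted chain graph of order $m$ for $m$ large. Here is the clean way. Let $c$ be the coloring. Among the $n$ rows $\{z_{(i,y)} : 1 \le y \le n\}$ and within each row look at how colors are distributed; but the cleanest approach is: restrict attention to the coloring $c$ on $C$ alone, and observe that $C$ with its internal structure "remembers" enough. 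Actually the right move is different and simpler: pick any two color classes $X_1, X_2$ and set $H := G[X_1 \cup X_2]$, which has rank-width at most $w$. We want to show that for a suitable choice of $X_1, X_2$, the graph $H$ contains (as an induced subgraph, hence $\rw \ge$ that of) a twisted chain graph of order $m$ with $m > w/12$, contradicting \cref{lem:twistedgrid-rk}.

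To get such a pattern inside two color classes, I would use a product Ramsey / pigeonhole argument on the index grid. Think of the color of vertex $z_{(i,j)}$ as a label in $[N(2)]$ on cell $(i,j)$ of the $n \times n$ grid, and of the color of $v_k$ (for $k = n(x-1)+y$) as a label on the $A$-side and $w_k$ on the $B$-side. A twisted chain graph of order $m$ inside $X_1 \cup X_2$ is obtained by selecting $m$ indices $I \subseteq [n]$ for the "$x$-coordinate" and $m$ indices $J \subseteq [n]$ for the "$y$-coordinate" such that: all vertices $z_{(i,j)}$ with $i,j \in I \times J$ — wait, the order-$m$ twisted chain graph needs $3m^2$ vertices with $|A| = |B| = m^2$, $|C| = m^2$; the natural way to find it inside $G$ is to pick $I \subseteq [n]$ with $|I| = m$ and restrict all three of $A$, $B$, $C$ to indices drawn from $I$, which yields an induced twisted chain graph of order $m$ (monotonicity of the definition under restricting the index set to a subinterval-respecting subset — one checks the adjacency rules are preserved). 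So I need: a set $I \subseteq [n]$ of size $m$, and two colors, such that every $z_{(i,j)}$ with $i,j \in I$ gets one of the two colors, every $v_k$ with both coordinates of $k$ in $I$ gets one of two colors, and every $w_k$ likewise. That is three simultaneous "two colors cover a large sub-grid" demands; each is a Ramsey-type statement (Gallai–Witt / Hales–Jewett-flavoured, but really just iterated pigeonhole on grids), and taking $n$ large enough relative to $N(2)$ (and $m$) makes all three succeed on a common index set $I$. The main obstacle — and the step to be careful about — is exactly this simultaneous extraction: ensuring the $A$-colors, $B$-colors, and $C$-colors can all be forced into two classes on one common index set of size $m = 12n_0 + 1$ (where $n_0$ is what we need for \cref{lem:twistedgrid-rk}), which forces the original $n$ to be chosen by iterating the grid-pigeonhole three times and is where the quantitative bound $n = n(N(2), n_0)$ comes from. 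Once $I$ is found, $G[X_1 \cup X_2]$ contains an induced twisted chain graph of order $m = 12n_0+1 > 12n_0$, so by \cref{lem:twistedgrid-rk} it has rank-width at least $n_0$; taking $n_0 > Q(2)$ contradicts the coloring guarantee, completing the proof. Hereditariness is used only implicitly, via the fact that the relevant induced subgraphs stay in the ambient framework — actually it is not even needed here since we only use $G \in \CCC$ itself, but it is harmless to keep.
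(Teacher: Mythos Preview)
Your overall strategy is right and matches the paper's: use a product-Ramsey argument on the grid indices of $A$, $B$, $C$ to extract a large induced twisted chain subgraph using only a bounded number of colors, then invoke \cref{lem:twistedgrid-rk} to contradict the rank-width bound. The paper carries this out exactly by applying the grid Ramsey lemma three times---once to make the $C$-part monochromatic on a subgrid $X_1\times Y_1$, then the $A$-part on $X_2\times Y_2\subseteq X_1\times Y_1$, then the $B$-part on $X_3\times Y_3\subseteq X_2\times Y_2$---obtaining an induced twisted chain graph of order $|X_3|$ that uses at most \emph{three} colors.

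The genuine gap in your write-up is the choice $p=2$. Your Ramsey step must work for \emph{every} $N(2)$-coloring of the twisted chain graph (since you have no control over which coloring the low rank-width hypothesis hands you), and it cannot: if the adversary colors all of $A$ with color~$1$, all of $B$ with color~$2$, and all of $C$ with color~$3$, then every induced twisted chain subgraph of positive order meets all three parts and hence all three colors, so no two color classes contain one. More generally, the iterated grid Ramsey makes each of the three parts monochromatic \emph{separately}, but gives no mechanism to force two of those three colors to coincide. The fix is simply to take $p=3$ (and $w=Q(3)$) and aim for an induced twisted chain subgraph receiving at most three colors; your argument then goes through verbatim, and this is precisely what the paper does.

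A minor point: you restrict both coordinates to a single index set $I$, but the product-Ramsey lemma naturally returns a rectangle $X'\times Y'$ with $X'\neq Y'$ in general. This is harmless---the restriction to $X'\times Y'$ (with the $B$-part indexed by the transposed rectangle, i.e.\ $w_{(y-1)n+x}$ for $(x,y)\in X'\times Y'$) is still an induced twisted chain graph of order $|X'|$, which is all you need.
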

\begin{proof}
  We show that for every pair of integers $m\ge 3$ and $n\ge 1$, there
  is a graph $G\in \CCC$ such that for every coloring of $G$ with $m$
  colors, there is an induced subgraph $H$ that receives at most $3$
  colors and has rank-width at least~$n$.  This implies that $\CCC$
  does not admit low rank-width colorings.  We will need the following
  simple Ramsey-type argument.  \cref{cl:prod-rams} follows, e.g.,
  from~\cite[Theorem~11.5]{Trotter1992}, but we give a simple proof
  for the sake of completeness.
  \begin{claim}\label{cl:prod-rams}
    For all positive integers $k,d$, there exists an integer
    $M=M(k,d)$ such that for all sets $X,Y$ with $|X|=|Y|=M$ and all
    functions $f\colon X\times Y\to \{1,\ldots,d\}$, there exist
    subsets $X'\subseteq X$ and $Y'\subseteq Y$ with $|X'|=|Y'|=k$
    such that $f$ sends all elements of $X'\times Y'$ to the same
    value.
  \end{claim}
  \begin{clproof}
    We prove the claim for $M(k,d)=k\cdot d^{dk}$. Let $X_0$ be an
    arbitrary subset $X$ of size $dk$. For each $y\in Y$, define the
    {\em{type}} of $y$ as the function
    $g_y\colon X_0\to \{1,\ldots,d\}$ defined as
    $g_y(x)=f(x,y)$. There are at most $d^{dk}$ different types
    possible, so there is a subset $Y'\subseteq Y$ of size $k$ such
    that each element of $Y'$ has the same type $g$.  Since
    $|X_0|=dk$, there is some $i\in \{1,\ldots,d\}$ such that $g$
    yields value $i$ for at least $k$ elements of $X_0$. Then if we
    take $X'$ to be an arbitrary set of $k$ elements mapped to $i$ by
    $g$, then $f(x,y)=i$ for each $(x,y)\in X'\times Y'$, as required.
  \end{clproof}

  Let $M_1:=M(12n, m)$, $M_2:=M(M_1,m)$, and $M_3:=M(M_2,m)$.  Let
  $G\in \cal C$ be a twisted chain graph of order $M_3$; adopt the
  notation from Definition~\ref{def:twi} for~$G$.  Suppose $G$ is
  colored by $m$ colors by a coloring $c$.  By \cref{cl:prod-rams},
  there exist $X_1, Y_1\subseteq \{1, \ldots, M_3\}$ with
  $|X_1|=|Y_1|=M_2$ such that
  $\{z_{(x,y)}\colon (x,y)\in X_1\times Y_1\}$ is monochromatic under
  $c$.

  Now, for an index $k\in \{1,\ldots,M_3^2\}$, let
  $(i_1(k),j_1(k))\in \{1,\ldots,m\}\times \{1,\ldots,m\}$ be the
  unique pair such that $k=(i_1(k)-1)M_3+j_1(k)$, and let
  $(i_2(k),j_2(k))\in \{1,\ldots,m\}\times \{1,\ldots,m\}$ be the
  unique pair such that $k=(j_2(k)-1)M_3+i_2(k)$.  By reindexing
  vertices $A$ and $C$ using pairs $(i_1(k),j_1(k))$ and
  $(i_2(k),j_2(k))$, we may view coloring $c$ on $A$ and $C$ as a
  coloring on $\{1,\ldots,M_3\}\times \{1,\ldots,M_3\}$.  By applying
  \cref{cl:prod-rams} to the vertices from $A$ indexed by
  $X_1\times Y_1$, we obtain subsets $X_2\subseteq X_1$ and
  $Y_2\subseteq Y_1$ such that $|X_2|=|Y_2|=M_1$ and the set
  $\{v_{(x-1)M_3+y}\colon x\in X_2, y\in Y_2\}$ is monochromatic.
  Finally, by applying \cref{cl:prod-rams} to the vertices from~$B$
  indexed by $X_2\times Y_2$, we obtain subsets $X_3\subseteq X_2$ and
  $Y_3\subseteq Y_2$ such that $|X_3|=|Y_3|=12n$ and the set
  $\{w_{(y-1)M_3+x}\colon (x,y)\in X_3\times Y_3\}$ is monochromatic.
  Now observe that the subgraph
  $G[\{v_{(x-1)M_3+y}, w_{(y-1)M_3+x}, z_{(x,y)}\colon (x,y)\in
  X_3\times Y_3\}]$
  receives at most $3$ colors, and is a twisted chain graph of order
  $12n$.  By \cref{lem:twistedgrid-rk} it has rank-width at least $n$,
  so this proves the claim.
\end{proof}

We now observe that a twisted chain graph of order $n$ is an interval
graph, provided each of $A$, $B$, and $C$ is a clique, and there are
no edges between $A$ and $B$.  Similarly, for each $n$ there is a
twisted chain graph of order $n$ that is a permutation graph.  See
Figures~\ref{fig:interval} and~\ref{fig:permutation} for examples of
intersection models.

\begin{figure}
\centerline{\includegraphics[scale=0.45]{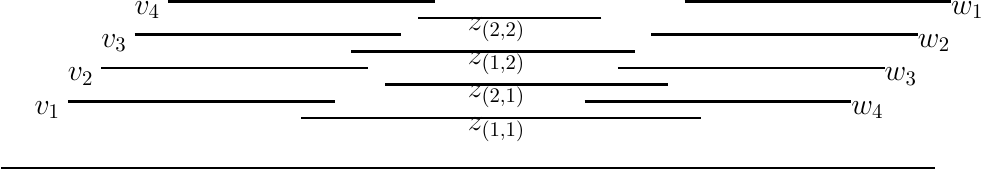}}
\caption{An interval intersection model of a twisted chain graph of
  order $2$. }
\label{fig:interval}
\end{figure}

\begin{figure}
\centerline{\includegraphics[scale=0.45]{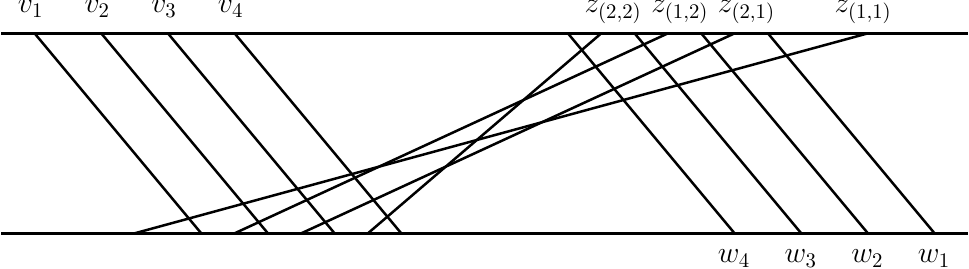}}
\caption{A permutation intersection model of a twisted chain graph of
  order $2$. }
\label{fig:permutation}
\end{figure}

\begin{lemma}\label{lem:twistedgrid1}
  Let $G$ be a twisted chain graph of order~$n$, for some positive
  integer~$n$.  If each of $A$, $B$, and $C$ is a clique, and there
  are no edges between $A$ and $B$, then $G$ is an interval graph.
\end{lemma}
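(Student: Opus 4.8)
The plan is to construct an explicit interval representation of $G$. Since $A$, $B$, $C$ are cliques, the intervals within each of these sets must pairwise intersect, and since there are no edges between $A$ and $B$, the intervals of $A$ and of $B$ must be disjoint; the edges from $C$ to $A$ and from $C$ to $B$ are the chain-graph patterns dictated by Definition~\ref{def:twi}. So I would place the $A$-intervals in one region of the line, the $B$-intervals in a disjoint region, and let the $C$-intervals each span a long segment that reaches partway into both regions, with the reach tuned so that exactly the prescribed adjacencies occur.

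More concretely, I would work on the real line and reserve, say, the interval $(0,1)$ for ``$A$-space'' and $(2,3)$ for ``$B$-space''. For $k = n(x-1)+y$ with $x,y\in\{1,\dots,n\}$, assign to $v_k$ a small interval around the point $1 - k/(n^2+1)$, so that all $v_k$ contain the point $1-\varepsilon$ (making $A$ a clique) but are ordered left-to-right by $k$. Symmetrically, assign to $w_k$ a small interval around $2 + k/(n^2+1)$, so $B$ is a clique, the $w_k$ are ordered by $k$, and $A$ and $B$ stay disjoint. Finally, to $z_{(i,j)}$ assign an interval of the form $(\ell_{i,j}, r_{i,j})$ with $\ell_{i,j}$ chosen so that $z_{(i,j)}$'s left endpoint lies just to the left of $v_k$ exactly when $v_k$ should be adjacent to $z_{(i,j)}$ — i.e. when $x < i$, or $x = i$ and $y \le j$, which is precisely the condition $n(x-1)+y \le n(i-1)+j$, so $\ell_{i,j}$ should sit just left of the point assigned to $v_{n(i-1)+j}$. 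Symmetrically $r_{i,j}$ is chosen just to the right of the point assigned to $w_{n(j-1)+i}$, capturing the $B$-side chain condition $n(x-1)+y \le n(j-1)+i$. Since every $z$-interval then contains all of $[1,2]$, the set $C$ is a clique as required.

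**The main thing to check** is that this model is faithful: that $v_k$ and $z_{(i,j)}$ intersect if and only if the defining inequality holds, and likewise on the $B$-side, and that I have not accidentally created or destroyed edges among $A\cup B\cup C$ beyond what is specified (the definition allows the edge relation within $A\cup B$ and within $C$ to be arbitrary, but here it is forced to be ``all cliques, no $A$–$B$ edges,'' which is exactly our hypothesis). The verification is a routine comparison of the linear order $n(x-1)+y$ on indices with the positions chosen on the line; the one point requiring a little care is handling the boundary/tie cases ($x = i$, $y = j$, etc.) by making all the perturbations $\varepsilon$ strictly smaller than the spacing $1/(n^2+1)$ between consecutive index-points, so that ``just to the left of'' is unambiguous. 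I expect no genuine obstacle here — the construction is forced by the structure of a twisted chain graph with clique parts — only the bookkeeping of getting the offsets to realize the two chain orders simultaneously, which is exactly why the two chain graphs $G[A\cup C]$ and $G[B\cup C]$ are laid out on opposite sides of the $C$-intervals.
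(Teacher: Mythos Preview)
Your approach is correct and matches the paper's: build an explicit interval model with the $A$-intervals nested on one side sharing a common endpoint (so $A$ is a clique), the $B$-intervals nested symmetrically on the other side, and each $C$-interval spanning between them with its two endpoints placed so as to realize the two chain orders simultaneously---the paper simply takes $v_k=[0,k]$, $w_k=[M-k,M]$, and $z_{(i,j)}=[(i-1)n+j,\,M-(j-1)n-i]$ for $M>2n^2$. One small slip in your write-up: ``small intervals around $1-k/(n^2+1)$'' cannot all contain a common point when the centers range across $(0,1)$, so you really want the $v_k$ to share an endpoint (e.g.\ $v_k=[0,\,1-k/(n^2+1)]$) rather than be short; with that adjustment the verification is indeed routine.
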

\begin{proof}
Let $M>2n^2$. Consider the following interval model:
\begin{itemize}
\item For each $i\in \{1,\ldots,n^2\}$, assign interval $[0,i]$ to
  $v_i$.
\item For each $i\in \{1,\ldots,n^2\}$, assign interval $[M-i,M]$ to
  $w_i$.
\item For each $x,y\in \{1,\ldots,n\}\times \{1,\ldots,n\}$, assign
  interval $[(x-1)n+y,M-(y-1)n-x]$ to $z_{(x,y)}$.
\end{itemize}
It can be easily seen that this is an interval model of the twisted
chain graph of order $n$.
\end{proof}

\begin{lemma}\label{lem:twistedgrid2}
  For each positive integer~$n$, there is a twisted chain graph of
  order~$n$ that is a permutation graph.
\end{lemma}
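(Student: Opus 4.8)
The plan is to give an explicit intersection representation by segments. Recall that a graph $H$ is a permutation graph precisely when one can assign to each vertex $u$ two real numbers $a(u),b(u)$, all the $a(u)$ pairwise distinct and all the $b(u)$ pairwise distinct, so that two vertices $u,u'$ are adjacent if and only if $(a(u)-a(u'))(b(u)-b(u'))<0$; here $a(u)$ and $b(u)$ are the positions of the two endpoints of the segment representing $u$ on the two parallel lines, and adjacency corresponds to two segments crossing. For a positive integer $n$ and $i,j\in\{1,\dots,n\}$ write $p_{(i,j)}=n(i-1)+j$ and $q_{(i,j)}=n(j-1)+i$, and represent the $3n^2$ vertices $A\cup B\cup C$ by
\[
a(v_k)=k,\quad b(v_k)=n^2+1-k,\quad a(w_k)=2n^2+k,\quad b(w_k)=3n^2+1-k,
\]
\[
a(z_{(i,j)})=2n^2+q_{(i,j)}+\tfrac12,\quad b(z_{(i,j)})=n^2+\tfrac12-p_{(i,j)}.
\]
The $a$-values of the $v_k$, the $w_k$ and the $z_{(i,j)}$ lie, respectively, in $\{1,\dots,n^2\}$, in $\{2n^2{+}1,\dots,3n^2\}$, and among the half-integers in $(2n^2,3n^2{+}1)$, hence are pairwise distinct; the same holds for the $b$-values.

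I would then check, type by type, that the permutation graph $G$ given by this representation is a twisted chain graph of order $n$. For two vertices of $A$ the pair $(a,b)$ is reversed, so $A$ is a clique, and likewise $B$ is a clique; the adjacencies inside $C$ come out to be some graph, which is permissible since the edge relation within $C$ is arbitrary. For $v_k$ and $w_{k'}$ we have $a(v_k)\le n^2<a(w_{k'})$ and $b(v_k)\le n^2<b(w_{k'})$, so $v_k\not\sim w_{k'}$; again this is permissible, as the edge relation within $A\cup B$ is arbitrary. For $v_k$ and $z_{(i,j)}$ one always has $a(v_k)=k\le n^2<a(z_{(i,j)})$, so $v_k\sim z_{(i,j)}$ if and only if $b(v_k)>b(z_{(i,j)})$, i.e. $n^2+1-k>n^2+\tfrac12-p_{(i,j)}$, i.e. $k\le p_{(i,j)}=n(i-1)+j$ --- exactly the relation demanded by \cref{def:twi}. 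Symmetrically, for $w_k$ and $z_{(i,j)}$ one always has $b(w_k)=3n^2+1-k\ge 2n^2+1>b(z_{(i,j)})$, so $w_k\sim z_{(i,j)}$ if and only if $a(w_k)<a(z_{(i,j)})$, i.e. $2n^2+k<2n^2+q_{(i,j)}+\tfrac12$, i.e. $k\le q_{(i,j)}=n(j-1)+i$ --- again exactly as required. Thus $G$ is simultaneously a twisted chain graph of order $n$ and a permutation graph.

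The only real idea is where to place $C$. The adjacency of $z_{(i,j)}$ to $A$ is controlled by the ``row-major'' index $p_{(i,j)}=n(i-1)+j$, while its adjacency to $B$ is controlled by the ``column-major'' index $q_{(i,j)}=n(j-1)+i$; since these indices differ by transposing the $n\times n$ grid, no assignment can make the endpoints of $C$ interleave with those of $A$ and of $B$ ``along the same line''. The way around this is to route the segments of $C$ from the far right of one line to the far left of the other: their endpoints on one line then interleave with those of $B$ and record the index $q$, their endpoints on the other line interleave with those of $A$ and record the index $p$, while $A$ and $B$ occupy disjoint corners and so are joined by no edges --- and, crucially, for each mixed pair $v_kz$ or $w_kz$ exactly one of the two coordinates can ever be decisive, which is what makes the half-graph relations come out cleanly. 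Once this picture is fixed, the argument is the short computation above; I do not anticipate any genuine difficulty beyond arriving at the right configuration.
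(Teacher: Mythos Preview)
Your proof is correct and follows essentially the same approach as the paper's: both give an explicit permutation model in which the segments for $C$ run diagonally from the $B$-region on one line to the $A$-region on the other, so that one endpoint records the column-major index $q_{(i,j)}$ and the other the row-major index $p_{(i,j)}$. The only cosmetic differences are that you make $A$ and $B$ cliques (reversing the order between the two lines) whereas the paper makes them independent sets (using vertical segments), and you use half-integer offsets to keep all endpoints distinct; if anything, your verification is more carefully written than the paper's one-line ``it can be easily seen''.
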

\begin{proof}
  Let $M>10n^2$. Consider the following permutation model, spanned
  between horizontal lines $\ell_1$ with $y$-coordinate $0$ and
  $\ell_2$ with $y$-coordinate $1$:
\begin{itemize}
\item For each $i\in \{1,\ldots,n^2\}$, assign the segment with
  endpoints $(i,0)$ and $(i,1)$ to $v_i$.
\item For each $i\in \{1,\ldots,n^2\}$, assign the segment with
  endpoints $(M-i,0)$ and $(M-i,1)$ to $w_i$.
\item For each $x,y\in \{1,\ldots,n\}\times \{1,\ldots,n\}$, assign
  the segment with endpoints $((x-1)n+y,0)$ and $(M-(y-1)n-x,1)$ to
  $z_{(x,y)}$
\end{itemize}
It can be easily seen that this is a permutation model of some twisted
chain graph of order $n$.
\end{proof}

By \cref{thm:interval-permut}, we obtain the following.

\begin{theorem}
  The classes of interval graphs and permutation graphs do not admit
  low rank-width colorings.
\end{theorem}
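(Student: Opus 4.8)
The plan is to derive the final theorem as an immediate corollary of \cref{thm:interval-permut} together with \cref{lem:twistedgrid1} and \cref{lem:twistedgrid2}. Recall that \cref{thm:interval-permut} says that any hereditary graph class containing twisted chain graphs of every order fails to admit low rank-width colorings. So it suffices to verify that both the class of interval graphs and the class of permutation graphs are hereditary and contain arbitrarily large twisted chain graphs.

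First I would observe that both classes are hereditary: an induced subgraph of an interval graph is again an interval graph (restrict the interval model to the surviving vertices), and likewise an induced subgraph of a permutation graph is a permutation graph (restrict the segment model). Next, for interval graphs, I invoke \cref{lem:twistedgrid1}: for every positive integer $n$, the twisted chain graph of order $n$ in which each of $A$, $B$, $C$ is a clique and there are no edges between $A$ and $B$ is an interval graph. This produces, for each $n$, a twisted chain graph of order $n$ lying in the class of interval graphs. Similarly, for permutation graphs, \cref{lem:twistedgrid2} directly gives, for each positive integer $n$, a twisted chain graph of order $n$ that is a permutation graph.

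Having checked both hypotheses of \cref{thm:interval-permut} for each of the two classes, the conclusion that neither admits low rank-width colorings follows at once. I would write this up as a two- or three-line proof: hereditariness is routine, the twisted chain graphs are supplied by the two preceding lemmas, and then \cref{thm:interval-permut} closes the argument. There is essentially no obstacle here — all the real work (the rank-width lower bound for twisted chain graphs in \cref{lem:twistedgrid-rk}, the Ramsey-type colour-stabilization argument in \cref{thm:interval-permut}, and the explicit geometric models in \cref{lem:twistedgrid1} and \cref{lem:twistedgrid2}) has already been done; the final theorem is purely an assembly step. The only point worth stating explicitly is that the twisted chain graphs produced by the two lemmas genuinely satisfy \cref{def:twi} with the required adjacency pattern between $A\cup B$ and $C$, which is exactly what those lemmas assert.

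\begin{proof}
The class of interval graphs and the class of permutation graphs are both hereditary, since an induced subgraph of an interval graph (respectively, a permutation graph) is again an interval graph (respectively, a permutation graph), as witnessed by restricting the corresponding intersection model to the surviving vertices. By \cref{lem:twistedgrid1}, for every positive integer $n$ there is a twisted chain graph of order $n$ that is an interval graph, and by \cref{lem:twistedgrid2}, for every positive integer $n$ there is a twisted chain graph of order $n$ that is a permutation graph. Hence both classes satisfy the hypotheses of \cref{thm:interval-permut}, and therefore neither of them admits low rank-width colorings.
\end{proof}
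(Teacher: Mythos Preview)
Your proposal is correct and matches the paper's approach exactly: the paper also derives this theorem immediately from \cref{thm:interval-permut} together with \cref{lem:twistedgrid1} and \cref{lem:twistedgrid2}, treating it as a direct corollary without even spelling out a proof environment. Your added remark about hereditariness is a harmless clarification that the paper leaves implicit.
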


\section{Erd\H{o}s-Hajnal property and $\chi$-boundedness}\label{sec:EHchi}
\vspace{-2mm}

\noindent A hereditary graph class $\CCC$ has the \emph{Erd\H{o}s-Hajnal
  property} if there is \mbox{$\epsilon>0$}, depending only on $\CCC$, such
that every $n$-vertex graph in $\CCC$ has either an independent set or
a clique of size $n^{\epsilon}$.  The conjecture of Erd\H{o}s and
Hajnal~\cite{ErdosH1989} states that for every fixed graph $H$, the
class of graphs not having $H$ as an induced subgraph has the
Erd\H{o}s-Hajnal property; cf.~\cite{Chudnovsky2014}.  We prove that
every class admitting low rank-width colorings has the
Erd\H{o}s-Hajnal property.  The proof is based on the fact that every
class of graphs of bounded rank-width has this property, which was
shown by Oum and Seymour~\cite{OumSpersonal}.  Since this claim is not
written in any published work, we include the proof for the
completeness.
	
A graph is a \emph{cograph} if it can be recursively constructed from
isolated vertices by means of the following two operations: (1) taking
disjoint union of two graphs and (2) joining two graphs, i.e., taking
their disjoint union and adding all possible edges with one endpoint
in the first graph and the second endpoint in the second.  It is
well-known that every $n$-vertex cograph contains either an
independent set or a clique of size $n^{1/2}$; this follows from the
fact that cographs are perfect.

\begin{lemma}[Oum and Seymour~\cite{OumSpersonal}]\label{lem:ehbddrw}
  For every positive integer $p$, there exists a constant
  $\delta=\delta(p)$ such that every $n$-vertex graph of rank-width at
  most~$p$ contains either an independent set or a clique of size at
  least $n^\delta$.
\end{lemma}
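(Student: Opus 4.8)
The plan is to prove \cref{lem:ehbddrw} by induction on $p$, using a balanced-separator argument on a rank-decomposition together with a Ramsey-type case analysis on the bipartition induced by the separator. Fix $G$ with $\rw(G)\le p$ on $n$ vertices. By \cref{lem:balancedpartition} (applied with $C=V(G)$), there is a bipartition $(X,Y)$ of $V(G)$ with $\cutrk_G(X)\le p$ and $\frac{n}{3}\le|X|,|Y|$. Since the cut-rank is at most $p$, the bipartite graph between $X$ and $Y$ has at most $2^p$ distinct ``neighbourhood types'' on each side: the rows of the $\abs{X}\times\abs{Y}$ adjacency matrix span a space of dimension at most $p$ over $\mathbb F_2$, so there are at most $2^p$ distinct rows, and likewise at most $2^p$ distinct columns. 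Hence $X$ partitions into at most $2^p$ classes $X_1,\dots,X_s$ (vertices with identical adjacency to $Y$) and $Y$ into at most $2^p$ classes $Y_1,\dots,Y_t$; by pigeonhole some $X_a$ has size at least $\abs{X}/2^p\ge n/(3\cdot 2^p)$ and some $Y_b$ has size at least $n/(3\cdot 2^p)$.

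The key point is that $X_a$ and $Y_b$ are \emph{completely adjacent or completely non-adjacent} to each other (they are single types), and moreover $\rw(G[X_a]),\rw(G[Y_b])\le p$ since rank-width is monotone under induced subgraphs. Now branch on whichever of $X_a,Y_b$ is larger; call it $Z$, so $\abs Z\ge n/(3\cdot 2^p)$ and $\rw(G[Z])\le p$. We would like to recurse, but recursing on $Z$ with the same parameter $p$ does not by itself shrink anything — so instead I would set up the induction to bound, simultaneously, the size needed to guarantee a homogeneous set of a given size. Concretely, define $h_p(k)$ to be the least $N$ such that every graph of rank-width $\le p$ on $\ge N$ vertices contains an independent set or a clique of size $k$; the statement to prove is that $h_p(k)\le k^{1/\delta(p)}$ for a suitable $\delta(p)>0$, equivalently that in an $n$-vertex graph of rank-width $\le p$ one finds a homogeneous set of size $n^{\delta(p)}$. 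The recursion above shows: from an $n$-vertex rank-width-$\le p$ graph we extract, with a constant blow-up factor $3\cdot 2^p$, a rank-width-$\le p$ graph on $Z$ together with a second set (the other of $X_a,Y_b$) that is homogeneous \emph{to} $Z$. Iterating the extraction of ``complete-or-empty to everything extracted so far'' pairs is exactly the standard way one proves the Erd\H{o}s--Hajnal property for graphs built by few ``types'' at each cut; but to make the exponent come out I would instead run the induction on $p$ and reduce to $p-1$.

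For the reduction to $p-1$: after obtaining $Z$ with $\abs Z\ge n/(3\cdot2^p)$, look \emph{inside} $Z$ at the same kind of cut. Actually the cleaner route is to recall that a graph of rank-width $0$ is a cograph-like object decomposed by cuts of rank $0$, i.e.\ each cut is complete or empty, and such graphs are exactly cographs; for these the bound $n^{1/2}$ is the base case quoted in the text. For the inductive step, take the rank-decomposition $(T,L)$ of $G$ of width $\le p$. Pick a balanced edge $e$ as in \cref{lem:balancedpartition}, yielding $(X,Y)$ of cut-rank $\le p$; as above find homogeneous-to-each-other $X_a\subseteq X$, $Y_b\subseteq Y$ each of size $\ge n/(3\cdot2^p)$. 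Repeat this \emph{within} the larger side, maintaining a growing family $S_1,S_2,\dots$ of pairwise ``complete-or-empty'' sets, each at least a $\tfrac1{3\cdot2^p}$-fraction of the previous; after $\log_{3\cdot2^p} n$ steps one side is exhausted, so one collects $\Omega(\log n / \log(3\cdot 2^p))$ such sets — but that only gives a logarithmic, not polynomial, homogeneous set, which is too weak. Hence the genuine argument must recurse on the \emph{rank} parameter, not just peel cuts: the right statement is that restricting to $X_a$ (or $Y_b$) does not merely preserve rank-width $\le p$ but, because the entire set $Y$ now looks identical from $X_a$, effectively one coordinate of the cut-space has been ``used up,'' so iterating $p{+}1$ times along a root-to-leaf direction drives the local cut-rank to $0$. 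Making this precise — tracking how the cut-rank of the relevant bipartitions decreases as we descend into ever-more-homogeneous pieces, so that after a bounded (depending on $p$) number of constant-factor size reductions we reach the cograph base case, yielding $\delta(p)$ of the form $c^{-p}$ — is the main obstacle, and it is exactly the place where the $2^p$ type bound and the monotonicity of rank-width have to be combined carefully.

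\begin{proof}[of~\cref{lem:ehbddrw}]
\end{proof}
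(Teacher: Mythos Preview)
Your setup is exactly right and matches the paper: apply \cref{lem:balancedpartition} to get $(X,Y)$ with $\cutrk_G(X)\le p$, then use the $2^p$-type bound to find $X_a\subseteq X$ and $Y_b\subseteq Y$, each of size at least $n/(3\cdot 2^p)$, that are completely adjacent or completely non-adjacent to each other. But from this point on you never land on a working recursion, and the three strategies you sketch all fail or are left unjustified.

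The missing idea is simple: recurse on \emph{both} $X_a$ and $Y_b$, not just on one of them, and do the induction on $n$ rather than on $p$. Since $G[X_a]$ and $G[Y_b]$ each have rank-width at most $p$, by induction each contains an induced cograph $H_1$, $H_2$ of size at least $(n/(3\cdot 2^p))^{\kappa}$. Because $X_a$ and $Y_b$ are homogeneous to each other, $G[V(H_1)\cup V(H_2)]$ is again a cograph (disjoint union or join of two cographs). Choosing $\kappa=\kappa(p)=1/(\log_2 3+p)$ makes $2\cdot(n/(3\cdot 2^p))^{\kappa}=n^{\kappa}$, so the induction closes; one then takes $\delta(p)=\kappa(p)/2$ via the $n^{1/2}$ bound for cographs. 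Your one-sided recursion throws away the other homogeneous piece and so cannot accumulate polynomial size; your ``peel off a sequence of pairwise-homogeneous blocks'' idea, as you yourself note, only yields logarithmic length; and your induction-on-$p$ plan relies on the unproved (and in general false) claim that passing to a single neighbourhood-type class drops the relevant rank parameter by one --- rank-width of $G[X_a]$ need not be below $p$, and the ``coordinate used up'' heuristic does not translate into a global rank-width decrease. The two-sided cograph-building recursion is what makes the exponent come out.
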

\vspace{-2mm}
\begin{proof}
  Let $\kappa(p):=\frac{1}{\log_2 3+ p}$ and
  $\delta(p):=\frac{\kappa(p)}{2}$.  We first prove by induction on
  $n$ that every $n$-vertex graph of rank-width at most $p$ contains a
  cograph of size at least $n^{\kappa(p)}$.  Assume $n\geq 3$, for
  otherwise every graph on at most $2$ vertices is a cograph.  Let $G$
  be an $n$-vertex graph of rank-width at most $p$. By
  \cref{lem:balancedpartition}, $G$ has a vertex bipartition $(A,B)$
  where $\cutrk_G(A)\le p$ and $\abs{A}, \abs{B}\geq \frac{n}{3}$.
  Since $\cutrk_G(A)\le p$, there exist $A'\subseteq A$ and
  $B'\subseteq B$ with $\abs{A'}, \abs{B'}\geq \frac{n}{3\cdot 2^p}$,
  such that either there are no edges between $A'$ and $B'$, or every
  vertex in $A'$ is adjacent to every vertex in $B'$.  By induction
  hypothesis, $G[A']$ contains a cograph $H_1$ and $G[B']$ contains a
  cograph $H_2$, both as induced subgraphs, such that
  $\abs{V(H_1)}, \abs{V(H_2)}\ge (\frac{n}{3\cdot 2^p})^{\kappa(p)}$.
  Note that $G[V(H_1)\cup V(H_2)]$ is a cograph.  Since
  $\kappa(p)=\frac{1}{\log_2 3+ p}$, we have
  $\abs{V(H_1)}+\abs{V(H_2)}\ge 2(\frac{n}{3\cdot 2^p})^{\kappa(p)}=
  n^{\kappa(p)}$.
  We conclude that~$G$ contains either an independent set or a clique
  of size at least $(n^{\kappa(p)})^{1/2}=n^{\delta(p)}$.
\end{proof}

\vspace{-2mm}
\begin{proposition}\label{prop:lowrwEH}
  Every hereditary class of graphs that admits low rank-width
  colorings has the Erd\H{o}s-Hajnal property.
\end{proposition}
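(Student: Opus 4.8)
The plan is to combine the low rank-width coloring with \cref{lem:ehbddrw}. Suppose $\CCC$ admits low rank-width colorings, witnessed by functions $N$ and $Q$, and let $G\in\CCC$ be an $n$-vertex graph. First I would apply the definition with $p=1$: color $V(G)$ with at most $N(1)$ colors so that each individual color class induces a subgraph of rank-width at most $Q(1)$. By averaging, some color class $W$ contains at least $n/N(1)$ vertices, and $G[W]$ has rank-width at most $Q(1)$.

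Next I would invoke \cref{lem:ehbddrw} with the parameter $p:=Q(1)$: there is a constant $\delta=\delta(Q(1))$ such that the $|W|$-vertex graph $G[W]$ of rank-width at most $Q(1)$ contains an independent set or a clique of size at least $|W|^{\delta}\ge (n/N(1))^{\delta}$. Since an independent set or clique in an induced subgraph is also one in $G$, this already gives a homogeneous set in $G$ of size at least $(n/N(1))^{\delta}$. It remains to convert the expression $(n/N(1))^\delta = n^\delta/N(1)^\delta$ into a clean bound of the form $n^{\epsilon}$ for a fixed $\epsilon>0$: since $N(1)$ and $\delta$ are constants depending only on $\CCC$, for all sufficiently large $n$ we have $n^\delta/N(1)^\delta \ge n^{\delta/2}$, so taking $\epsilon:=\delta/2$ works for large $n$; small $n$ can be absorbed by shrinking $\epsilon$ further (or noting the property is only asymptotic and trivially holds when $n^\epsilon\le 1$).

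This argument is essentially immediate once \cref{lem:ehbddrw} is in hand, so I do not expect a genuine obstacle; the only care needed is the bookkeeping with the constants to extract a single exponent $\epsilon$ independent of $n$, and making sure to note that it suffices to use the coloring for $p=1$ (we never need to look at unions of several color classes here, only a single large monochromatic class of bounded rank-width).

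\begin{proof}
Let $\CCC$ be a class admitting low rank-width colorings, witnessed by functions $N$ and $Q$, and let $G\in\CCC$ be a graph on $n$ vertices. Apply the definition of low rank-width colorings with $p=1$: there is a coloring of $V(G)$ with at most $N(1)$ colors such that each color class induces a subgraph of rank-width at most $Q(1)$. By the pigeonhole principle, some color class $W$ satisfies $|W|\ge n/N(1)$, and $G[W]$ has rank-width at most $Q(1)$. By \cref{lem:ehbddrw} applied with the parameter $Q(1)$, there is a constant $\delta=\delta(Q(1))$, depending only on $\CCC$, such that $G[W]$ contains an independent set or a clique of size at least $|W|^\delta\ge (n/N(1))^\delta$. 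Such a set is also an independent set or a clique in $G$. Setting $\epsilon:=\delta/2$, we have $(n/N(1))^\delta=n^\delta/N(1)^\delta\ge n^\epsilon$ for all $n$ large enough that $n^{\delta/2}\ge N(1)^\delta$; for the finitely many smaller $n$ the bound $n^\epsilon\le 1$ holds trivially after possibly decreasing $\epsilon$, so in all cases $G$ contains an independent set or a clique of size at least $n^\epsilon$. Hence $\CCC$ has the Erd\H{o}s-Hajnal property.
\end{proof}
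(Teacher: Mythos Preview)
Your argument is essentially identical to the paper's: apply the coloring with $p=1$, pigeonhole to a large monochromatic class of bounded rank-width, invoke \cref{lem:ehbddrw}, and then split into large and small $n$ to extract a uniform exponent $\epsilon$. The only slip is in your treatment of small $n$: for any $n\ge 2$ and $\epsilon>0$ one has $n^\epsilon>1$, so the claim ``$n^\epsilon\le 1$ holds trivially'' is false as written. The fix is that any graph on $n\ge 2$ vertices has a homogeneous set of size $2$, so one only needs $n^\epsilon\le 2$ for the finitely many small $n$; the paper achieves this explicitly by including the term $1/(2\log_2 N(1))$ in its definition of $\epsilon$.
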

\vspace{-2mm}
\begin{proof}
  Let $\CCC$ be the class of graphs in question.  Fix any $G\in \CCC$,
  say on~$n$ vertices.  Since $\CCC$ admits low rank-width colorings,
  there exist functions \mbox{$N\colon \N\rightarrow\N$} and
  $R\colon \N\rightarrow \N$, depending only on $\CCC$, such that for
  all $p$, $G$ can be colored using $N(p)$ colors so that each induced
  subgraph of $G$ that receives $i\le p$ colors has rank-width at most
  $R(i)$.  Let $c$ be such a coloring for $p=1$; then $c$ uses $N(1)$
  colors.
  Let $\delta(p)$ be the function defined in \cref{lem:ehbddrw}.  We
  define
  \[\epsilon:=\min \left( \frac{\delta(R(1))}{2}, \frac{1}{2\log_2
      N(1)}\right),\]
  and claim that $G$ has an independent set or a clique of size at
  least $n^{\epsilon}$.

  First, assume that $n\ge N(1)^2$.  Then there is a color $i$ such that
  $|c^{-1}(i)|\geq \frac{n}{N(1)}$.  Thus, the subgraph~$H$ induced by
  the vertices with color $i$ has at least $\frac{n}{N(1)}$ vertices
  and rank-width at most $R(1)$.  Since $n\ge N(1)^2$, by
  \cref{lem:ehbddrw}, $H$, and thus also $G$, contains either an
  independent set or a clique of size at least
  $$\abs{V(H)}^{\delta(R(1))}\ge \left(\frac{n}{N(1)}
  \right)^{\delta(R(1))}\ge n^{\frac{\delta(R(1))}{2}}\ge
  n^{\epsilon}.$$
	
  Second, assume $n<N(1)^2$.  Then
  $n^{\epsilon}<N(1)^{2\epsilon}\leq 2$, so any two-vertex induced
  subgraph of $G$ is either a clique or an independent set of size at
  least~$n^{\epsilon}$.
\end{proof}
	
A hereditary class $\CCC$ of graphs is \emph{$\chi$-bounded} if there
exists a function \mbox{$f\colon \mathbb N\rightarrow \mathbb N$} such that
for every $G\in \CCC$ and an induced subgraph $H$ of $G$, we have
$\chi(H)\le f(\omega(H))$, where $\chi(H)$ is the chromatic number of
$H$ and $\omega(H)$ is the size of a maximum clique in $H$.  It was
proved by Dvo{\v{r}}{\'a}k and Kr{\'a}l'~\cite{DvorakK2012} that for
every $p$, the class of graphs of rank-width at most $p$ is
$\chi$-bounded.
		
\begin{theorem}[Dvo\v{r}\'{a}k and
  Kr\'{a}l'~\cite{DvorakK2012}]\label{thm:bddrwchi}
  For each positive integer $p$, the class of graphs of rank-width at
  most $p$ is $\chi$-bounded.
\end{theorem}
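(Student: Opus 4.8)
The plan is to reduce to clique-width and to color by induction along a clique-width expression, which is essentially the route of Dvo\v{r}\'{a}k and Kr\'{a}l'~\cite{DvorakK2012}. Since rank-width at most $p$ is equivalent to clique-width at most some $k=k(p)$, it suffices to show that graphs of clique-width at most $k$ are $\chi$-bounded. Fix a $k$-expression building $G$ and view it as a parse tree: every node $t$ carries a vertex-labeled graph $(G_t,\lambda_t)$ with $\lambda_t\colon V(G_t)\to\{1,\dots,k\}$, obtained from its children by one of the clique-width operations (introduce a labeled vertex; disjoint union; relabel one label into another; add all edges between two label classes), and $G=G_{\mathrm{root}}$.

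Walking up the tree, I would maintain a proper coloring $c_t$ of $G_t$ together with the tuple of \emph{color profiles} $(C_1^t,\dots,C_k^t)$, where $C_\ell^t$ is the set of colors appearing on vertices of label $\ell$. Introduce and relabel nodes are immediate. For a disjoint union of $(G_1,\lambda_1)$ and $(G_2,\lambda_2)$ one first recolors $c_2$ so that its profiles match those of $c_1$ label by label, and then merges; no conflict arises since there are no edges across. The delicate case is $\mathrm{add}(i,j)$, which inserts a complete bipartite graph between the label-$i$ and label-$j$ classes, so one must recolor to make $C_i^t$ and $C_j^t$ disjoint. The key is to keep the profiles so structured that this can always be done while increasing the total number of colors only by an amount controlled by $\omega(G_t)$ and $k$ — intuitively, a join of labels $i$ and $j$ can only be ``responsible'' for cliques of size up to $\omega(G_t)$, and the recoloring budget is charged against this, so the final palette depends on $\omega(G)$ and $p$ alone.

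The main obstacle is precisely to make this last step pay off with a bound depending on $\omega$ and $p$ only, rather than on $|V(G)|$. Note that a balanced bipartition $(A,B)$ of cut-rank at most $p$ from \cref{lem:balancedpartition} splits $A$ and $B$ into at most $2^p$ classes each, with the adjacency between an $A$-class and a $B$-class being all-or-nothing; combining proper colorings of $G[A]$ and $G[B]$ across such a cut is then easy, but it multiplies the palette by roughly $2^p$ per level, and over the $\Theta(\log |V(G)|)$ levels of a recursive balanced decomposition this only yields a bound depending on $|V(G)|$, which does not establish $\chi$-boundedness. Avoiding this blow-up is exactly why one works bottom-up along a single clique-width expression with the color-profile invariant above instead of via a black-box divide-and-conquer, and making that bookkeeping precise is the technical heart of the argument of~\cite{DvorakK2012}.
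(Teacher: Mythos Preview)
The paper does not give its own proof of this theorem: it is quoted verbatim from Dvo\v{r}\'{a}k and Kr\'{a}l'~\cite{DvorakK2012} and used as a black box in the proof of \cref{prop:lowrwXbdd}. So there is nothing in the paper to compare your attempt against.

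As for your sketch on its own merits: it is a plan, not a proof. You correctly reduce to bounded clique-width and correctly identify the \emph{add}$(i,j)$ operation as the only nontrivial case, but the sentence ``the recoloring budget is charged against this, so the final palette depends on $\omega(G)$ and $p$ alone'' is an assertion, not an argument. A single join between labels $i$ and $j$ is easy to handle (make $C_i$ and $C_j$ disjoint, costing at most $|C_i|+|C_j|$ colors), but joins can occur arbitrarily many times along the parse tree, and you have not explained what invariant on the profiles $(C_1^t,\dots,C_k^t)$ prevents the palette from growing with the depth of the expression. Your own second paragraph makes exactly this point against the balanced-cut recursion; the same objection applies to your first paragraph until you specify the invariant. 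In~\cite{DvorakK2012} this is handled by working directly over the rank-decomposition and controlling, for every edge of the tree, the multiset of colors used by each of the at most $2^p$ ``row types'' on either side; the bound they obtain is exponential in $p$ and polynomial in $\omega$, and getting there requires real bookkeeping that your sketch defers entirely to the citation. If you intend this as a pointer to the literature, that matches what the paper does; if you intend it as a proof, the crucial step is missing.
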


We observe that this fact directly generalizes to classes admitting
low rank-width colorings.

\begin{proposition}\label{prop:lowrwXbdd}
  Every hereditary class of graphs that admits low rank-width
  colorings is $\chi$-bounded.
\end{proposition}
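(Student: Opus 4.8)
The plan is to combine the low rank-width coloring with the $\chi$-boundedness of bounded rank-width classes (\cref{thm:bddrwchi}) applied with $p=2$. First I would fix a class $\CCC$ admitting low rank-width colorings, with associated functions $N$ and $Q$, and let $g$ be the function witnessing $\chi$-boundedness of the class of graphs of rank-width at most $Q(2)$, as given by \cref{thm:bddrwchi}. Given a graph $G\in\CCC$ and an induced subgraph $H$ of $G$, the key observation is that $H$ also admits a low rank-width coloring (this is the monotonicity of the property under induced subgraphs, already noted after the definition), so it suffices to bound $\chi(H)$ in terms of $\omega(H)$ directly for $H$ itself.

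Next I would apply the low rank-width coloring to $H$ with parameter $p=2$: this gives a vertex coloring $c$ of $H$ using at most $N(2)$ colors such that the union of any two color classes induces a subgraph of rank-width at most $Q(2)$. For each unordered pair $\{a,b\}$ of color classes (including the degenerate pair $\{a,a\}$, i.e.\ a single class), the induced subgraph $H_{a,b} := H[c^{-1}(a)\cup c^{-1}(b)]$ has rank-width at most $Q(2)$, hence by \cref{thm:bddrwchi} satisfies $\chi(H_{a,b})\le g(\omega(H_{a,b}))\le g(\omega(H))$. In particular each $H_{a,b}$ admits a proper coloring $\phi_{a,b}$ with at most $g(\omega(H))$ colors, using a palette $\{1,\dots,g(\omega(H))\}$.

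Finally I would assemble these local colorings into a global proper coloring of $H$. The standard trick is to give each vertex $v\in V(H)$ the combined color $\bigl(c(v),\,(\phi_{a,b}(v))_{\{a,b\}}\bigr)$ where the second coordinate ranges over all pairs containing $c(v)$; equivalently, color $v$ by the tuple recording its $\phi$-value in every $H_{a,b}$ in which it appears. If $uv\in E(H)$, then $u$ and $v$ both lie in $H_{c(u),c(v)}$, and since $\phi_{c(u),c(v)}$ is a proper coloring of that subgraph they receive different values there; hence their combined colors differ. The number of colors used is at most $N(2)\cdot g(\omega(H))^{N(2)}$, which is a function of $\omega(H)$ alone (since $N(2)$ is a constant depending only on $\CCC$). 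Thus $\chi(H)\le f(\omega(H))$ for $f(\omega):=N(2)\cdot g(\omega)^{N(2)}$, proving $\chi$-boundedness.

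The only mild subtlety — hardly an obstacle — is handling the single-class case $a=b$ so that every edge inside one color class is also properly colored; this is subsumed by allowing degenerate pairs, or alternatively by noting that any single color class has rank-width at most $Q(1)\le Q(2)$ and applying \cref{thm:bddrwchi} to it as well. No step presents a real difficulty; the proof is essentially a direct product construction over pairs of color classes.
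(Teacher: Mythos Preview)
Your argument is correct, and it follows the same product-coloring idea as the paper, but you take an unnecessary detour through pairs of color classes. The paper simply applies the low rank-width coloring with $p=1$: each single color class $G_i=G[c^{-1}(i)]$ already has rank-width at most $Q(1)$, so by \cref{thm:bddrwchi} it admits a proper coloring $c_i$ with at most $g(\omega(G))$ colors, and the product coloring $v\mapsto (c(v),c_{c(v)}(v))$ is proper using at most $N(1)\cdot g(\omega(G))$ colors. The reason $p=1$ suffices is exactly the observation hidden in your own properness check: if $c(u)\neq c(v)$ the first coordinate already separates $u$ and $v$, so the auxiliary colorings $\phi_{a,b}$ with $a\neq b$ are never used. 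Your version works, but it inflates the bound to $N(2)\cdot g(\omega)^{N(2)}$ for no gain; dropping the non-degenerate pairs recovers the paper's cleaner argument and its linear-in-$g$ bound.
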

\begin{proof}
  Let $\CCC$ be the hereditary class of graphs in question, let
  $G\in\CCC$, and let $q=\omega(G)$.  Since $\CCC$ admits low
  rank-width colorings, there exist functions
  $N\colon \N\rightarrow\N$ and $R\colon \N\rightarrow \N$, depending
  only on $\CCC$, such that for all $p$, $G$ can be colored using
  $N(p)$ colors so that each induced subgraph of $G$ that receives
  $i\le p$ colors has rank-width at most $R(i)$.  Let $c$ be such a
  coloring for $p=1$, and w.l.o.g. suppose that~$c$ uses colors
  $\{1,\ldots,N(1)\}$.  By \cref{thm:bddrwchi}, there is function
  $f_{R(1)}$ such that for every graph $H$ of rank-width at most
  $R(1)$, we have $\chi(H)\le f_{R(1)}(\omega(H))$.
	
  For $i\in \{1,\ldots,N(1)\}$, let $G_i=G[c^{-1}(i)]$ be the subgraph
  induced by vertices of color $i$.  Since~$G_i$ is an induced
  subgraph of $G$, we have that $\omega(G_i)\leq q$, so $G_i$ has a
  proper coloring $c_i$ using $f_{R(1)}(q)$ colors, say colors
  $\{1,\ldots,f(q)\}$.  Then we can take the product coloring $c'$ of
  $G$ defined as $c'(u)=(c(u),c_{c(u)}(u))$. Observe that since each
  coloring $c_i$ is proper, $c'$ is a proper coloring of $G$.  It
  follows that the chromatic number of $G$ is at most
  $N(1)\cdot f(q)$, so we can take $f'(q)=N(1)\cdot f_{R(1)}(q)$ as
  the $\chi$-bounding function for $\CCC$.
\end{proof}

\vspace{-7mm}
\section{Conclusions}\label{sec:conc}
\vspace{-2mm}

\noindent In this work we introduced the concept of low rank-width colorings,
and showed that such colorings exist on $r$th powers of graphs from
any bounded expansion class, for any fixed $r$, as well as on unit
interval and bipartite permutation graphs.  These classes are
non-sparse and have unbounded rank-width.  On the negative side, the
classes of interval and permutation graphs do not admit low rank-width
colorings.

The obvious open problem is to characterise hereditary graph classes
which admit low rank-width colorings in the spirit of the
characterisation theorem for graph classes admitting low tree-depth
colorings. We believe that \cref{thm:interval-permut} may provide some
insight into this question, as it shows that containing arbitrarily
large twisted chain graphs is an obstacle for admitting low rank-width
colorings. Is it true that every hereditary graph class that does not
admit low rank-width colorings has to contain arbitrarily large
twisted chain graphs?

In this work we did not investigate the question of computing low
rank-width colorings, and this question is of course crucial for any
algorithmic applications.  Our proof for the powers of sparse graphs
can be turned into a polynomial-time algorithm that, given a graph $G$
from a graph class of bounded expansion $\CCC$, first computes a low
tree-depth coloring, and then turns it into a low rank-width coloring
of $G^r$, for a fixed constant $r$.  However, we do not know how to
efficiently compute a low rank-width coloring given the graph $G^r$
alone, without the knowledge of $G$.  The even more general problem of
efficiently constructing an approximate low rank-width coloring of any
given graph remains wide open.

Finally, we remark that our proof for the existence of low rank-width
colorings on powers of graphs from a class of bounded expansion
actually yields a slightly stronger result. Ganian et
al.~\cite{GanianHNJP2012} introduced a parameter {\em{shrub-depth}}
(or {\em{SC-depth}}), which is a depth analogue of rank-width, in the
same way as tree-depth is a depth analogue of tree-width. DeVos, Kwon
and Oum~\cite{DevosKO2019} later introduced another parameter
{\em{rank-depth}} which is tied to the shrub-depth.  It can be shown
that for constant~$r$, the $r$th power of a graph of constant
tree-depth belongs to a class of constant shrub-depth, and hence our
colorings for powers of graphs from a class of bounded expansion are
actually low shrub-depth colorings. In a follow-up of this
work~\cite{Gaj18} we proved that in fact every first-order
interpretation of a bounded expansion class (the $r$th power of a
graph is a simple first-order interpretation) has low shrub-depth
colorings.  It remains an important open problem whether the $r$th
powers, or more generally first-order interpretations, of nowhere
dense graph classes for any $\epsilon>0$ admit low shrub-depth
colorings with $\Oof(n^\epsilon)$ colors.

\vskip 0.2cm \textbf{Acknowledgment.} We would like to thank
Konrad Dabrowski for pointing out the known constructions similar to
twisted chain graphs. We also thank the anonymous
reviewers for careful reading and helpful advices to improve the
original manuscript.

\bibliographystyle{abbrv}
\bibliography{ref}

\begin{thebibliography}{10}

\bibitem{bodlaender1998rankings}
H.~L. Bodlaender, J.~S. Deogun, K.~Jansen, T.~Kloks, D.~Kratsch, H.~M{\"u}ller,
  and Z.~Tuza.
\newblock Rankings of graphs.
\newblock {\em SIAM Journal on Discrete Mathematics}, 11(1):168--181, 1998.

\bibitem{BonamyBT2016}
M.~Bonamy, N.~Bousquet, and S.~Thomass\'{e}.
\newblock {The {E}rd\H{o}s-{H}ajnal conjecture for long holes and antiholes}.
\newblock {\em SIAM J. Discrete Math.}, 30(2):1159--1164, 2016.

\bibitem{Brandstadt2006}
A.~Brandst\"{a}dt, J.~Engelfriet, H.-O. Le, and V.~V. Lozin.
\newblock Clique-width for 4-vertex forbidden subgraphs.
\newblock {\em Theory Comput. Syst.}, 39(4):561--590, 2006.

\bibitem{bui2011boolean}
B.-M. Bui-Xuan, J.~A. Telle, and M.~Vatshelle.
\newblock Boolean-width of graphs.
\newblock {\em Theoretical Computer Science}, 412(39):5187--5204, 2011.

\bibitem{ChenGP02}
Z.~Chen, M.~Grigni, and C.~H. Papadimitriou.
\newblock Map graphs.
\newblock {\em J. {ACM}}, 49(2):127--138, 2002.

\bibitem{Chudnovsky2014}
M.~Chudnovsky.
\newblock {The {E}rd\H{o}s-{H}ajnal conjecture---a survey}.
\newblock {\em Journal of Graph Theory}, 75(2):178--190, 2014.

\bibitem{Chudnovsky2018}
M.~Chudnovsky and S.-i. Oum.
\newblock {Vertex-minors and the Erd\H{o}s--Hajnal conjecture}.
\newblock {\em Discrete Math.}, 341(12):3498--3499, 2018.

\bibitem{courcelle1990graph}
B.~Courcelle.
\newblock Graph rewriting: {A}n algebraic and logic approach.
\newblock {\em Handbook of Theoretical Computer Science, volume B}, pages
  193--242, 1990.

\bibitem{courcelle1990monadic}
B.~Courcelle.
\newblock The monadic second-order logic of graphs. {I}. {R}ecognizable sets of
  finite graphs.
\newblock {\em Information and Computation}, 85(1):12--75, 1990.

\bibitem{courcelle1993handle}
B.~Courcelle, J.~Engelfriet, and G.~Rozenberg.
\newblock Handle-rewriting hypergraph grammars.
\newblock {\em Journal of Computer and System Sciences}, 46(2):218--270, 1993.

\bibitem{CourcelleGK2011}
B.~Courcelle, C.~Gavoille, and M.~M. Kant\'{e}.
\newblock Compact labelings for efficient first-order model-checking.
\newblock {\em J. Comb. Optim.}, 21(1):19--46, 2011.

\bibitem{courcelle2000linear}
B.~Courcelle, J.~A. Makowsky, and U.~Rotics.
\newblock Linear time solvable optimization problems on graphs of bounded
  clique-width.
\newblock {\em Theory of Computing Systems}, 33(2):125--150, 2000.

\bibitem{DabrowskiP16}
K.~Dabrowski and D.~Paulusma.
\newblock Clique-width of graph classes defined by two forbidden induced
  subgraphs.
\newblock {\em Comput. J.}, 59(5):650--666, 2016.

\bibitem{deogun1994vertex}
J.~S. Deogun, T.~Kloks, D.~Kratsch, and H.~M{\"u}ller.
\newblock On vertex ranking for permutation and other graphs.
\newblock In {\em STACS 1994}, pages 747--758. Springer, 1994.

\bibitem{devos2004excluding}
M.~DeVos, G.~Ding, B.~Oporowski, D.~P. Sanders, B.~Reed, P.~Seymour, and
  D.~Vertigan.
\newblock Excluding any graph as a minor allows a low tree-width 2-coloring.
\newblock {\em Journal of Combinatorial Theory, Series B}, 91(1):25--41, 2004.

\bibitem{DevosKO2019}
M.~DeVos, O.~Kwon, and S.~Oum.
\newblock Branch-depth: Generalizing tree-depth of graphs.
\newblock {\em arXiv preprint}, 2019.
\newblock arXiv:1903.11988.

\bibitem{diestel2012graph}
R.~Diestel.
\newblock {\em Graph Theory, 4th Edition}, volume 173 of {\em Graduate {T}exts
  in {M}athematics}.
\newblock Springer, 2012.

\bibitem{dvovrak2013testing}
Z.~Dvo{\v{r}}{\'a}k, D.~Kr{\'a}l', and R.~Thomas.
\newblock Testing first-order properties for subclasses of sparse graphs.
\newblock {\em Journal of the ACM (JACM)}, 60(5):36, 2013.

\bibitem{DvorakK2012}
Z.~Dvo\v{r}\'{a}k and D.~Kr\'{a}l'.
\newblock Classes of graphs with small rank decompositions are $\chi$-bounded.
\newblock {\em European Journal of Combinatorics}, 33(4):679--683, 2012.

\bibitem{ErdosH1989}
P.~Erd\H{o}s and A.~Hajnal.
\newblock Ramsey-type theorems.
\newblock {\em Discrete Applied Mathematics}, 25(1-2):37--52, 1989.

\bibitem{FoxPS2017}
J.~Fox, J.~Pach, and A.~Suk.
\newblock {Erd\H{o}s-{H}ajnal conjecture for graphs with bounded
  {VC}-dimension}.
\newblock In {\em 33rd {I}nternational {S}ymposium on {C}omputational
  {G}eometry}, volume~77 of {\em LIPIcs. Leibniz Int. Proc. Inform.}, pages
  Art. No. 43, 15. Schloss Dagstuhl. Leibniz-Zent. Inform., Wadern, 2017.

\bibitem{Gaj18}
J.~Gajarsk{\'y}, S.~Kreutzer, J.~Nesetril, P.~O. de~Mendez, M.~Pilipczuk,
  S.~Siebertz, and S.~Torunczyk.
\newblock {First-Order Interpretations of Bounded Expansion Classes}.
\newblock In I.~Chatzigiannakis, C.~Kaklamanis, D.~Marx, and D.~Sannella,
  editors, {\em 45th International Colloquium on Automata, Languages, and
  Programming (ICALP 2018)}, volume 107 of {\em Leibniz International
  Proceedings in Informatics (LIPIcs)}, pages 126:1--126:14, Dagstuhl, Germany,
  2018. Schloss Dagstuhl--Leibniz-Zentrum fuer Informatik.

\bibitem{GanianHNJP2012}
R.~Ganian, P.~Hlin{\v{e}}n{\`y}, J.~Ne{\v{s}}et{\v{r}}il,
  J.~Obdr{\v{z}}{\'a}lek, P.~Ossona~de Mendez, and R.~Ramadurai.
\newblock When trees grow low: shrubs and fast {${\rm MSO}_1$}.
\newblock In {\em MFCS 2012}, volume 7464 of {\em Lecture Notes in Computer
  Science}, pages 419--430. Springer, 2012.

\bibitem{grohe2011methods}
M.~Grohe and S.~Kreutzer.
\newblock Methods for algorithmic meta theorems.
\newblock {\em Model Theoretic Methods in Finite Combinatorics}, 558:181--206,
  2011.

\bibitem{grohe2014deciding}
M.~Grohe, S.~Kreutzer, and S.~Siebertz.
\newblock Deciding first-order properties of nowhere dense graphs.
\newblock In {\em STOC 2014}, pages 89--98. ACM, 2014.

\bibitem{grohe2004learnability}
M.~Grohe and G.~Tur{\'a}n.
\newblock Learnability and definability in trees and similar structures.
\newblock {\em Theory of Computing Systems}, 37(1):193--220, 2004.

\bibitem{GurskiW2009}
F.~Gurski and E.~Wanke.
\newblock The {NLC}-width and clique-width for powers of graphs of bounded
  tree-width.
\newblock {\em Discrete Applied Mathematics}, 157(4):583--595, 2009.

\bibitem{Gyarfas1987}
A.~Gy\'{a}rf\'{a}s.
\newblock Problems from the world surrounding perfect graphs.
\newblock {\em Zastosowania Matematyki (Applicationes Mathematicae)},
  19:413--441, 1987.

\bibitem{hlinveny2008width}
P.~Hlin{\v{e}}n{\`y}, S.~Oum, D.~Seese, and G.~Gottlob.
\newblock Width parameters beyond tree-width and their applications.
\newblock {\em The Computer Journal}, 51(3):326--362, 2008.

\bibitem{kierstead2003orders}
H.~A. Kierstead and D.~Yang.
\newblock Orderings on graphs and game coloring number.
\newblock {\em Order}, 20(3):255--264, 2003.

\bibitem{kwon17}
O.~Kwon, M.~Pilipczuk, and S.~Siebertz.
\newblock On low rank-width colorings.
\newblock In {\em Graph-theoretic concepts in computer science}, volume 10520
  of {\em Lecture Notes in Comput. Sci.}, pages 372--385. Springer, Cham, 2017.

\bibitem{Liebenau2019}
A.~Liebenau, M.~Pilipczuk, P.~Seymour, and S.~Spirkl.
\newblock {Caterpillars in Erd\H{o}s--{H}ajnal}.
\newblock {\em J. Combin. Theory Ser. B}, 136:33--43, 2019.

\bibitem{Lozin2011}
V.~V. Lozin.
\newblock Minimal classes of graphs of unbounded clique-width.
\newblock {\em Annals of Combinatorics}, 15(4):707--722, 2011.

\bibitem{LozinR2007}
V.~V. Lozin and G.~Rudolf.
\newblock Minimal universal bipartite graphs.
\newblock {\em Ars Combinatoria}, 84:345--356, 2007.

\bibitem{nevsetvril2006tree}
J.~Ne{\v{s}}et{\v{r}}il and P.~Ossona~de Mendez.
\newblock Tree-depth, subgraph coloring and homomorphism bounds.
\newblock {\em European Journal of Combinatorics}, 27(6):1022--1041, 2006.

\bibitem{nevsetvril2008grad}
J.~Ne{\v{s}}et{\v{r}}il and P.~Ossona~de Mendez.
\newblock Grad and classes with bounded expansion {I}. {D}ecompositions.
\newblock {\em European Journal of Combinatorics}, 29(3):760--776, 2008.

\bibitem{nevsetvril2011nowhere}
J.~Ne{\v{s}}et{\v{r}}il and P.~Ossona~de Mendez.
\newblock On nowhere dense graphs.
\newblock {\em European Journal of Combinatorics}, 32(4):600--617, 2011.

\bibitem{nevsetvril2003order}
J.~Ne{\v{s}}et{\v{r}}il and S.~Shelah.
\newblock On the order of countable graphs.
\newblock {\em European Journal of Combinatorics}, 24(6):649--663, 2003.

\bibitem{oum2008rank}
S.~Oum.
\newblock Rank-width is less than or equal to branch-width.
\newblock {\em Journal of Graph Theory}, 57(3):239--244, 2008.

\bibitem{oum2016rank}
S.~Oum.
\newblock Rank-width: Algorithmic and structural results.
\newblock {\em Discrete Applied Mathematics}, 2016.

\bibitem{OumSpersonal}
S.~Oum and P.~Seymour.
\newblock Personal communication.

\bibitem{oum2006approximating}
S.~Oum and P.~Seymour.
\newblock Approximating clique-width and branch-width.
\newblock {\em Journal of Combinatorial Theory, Series B}, 96(4):514--528,
  2006.

\bibitem{robertson2003graph}
N.~Robertson and P.~D. Seymour.
\newblock Graph minors. {XVI}. {E}xcluding a non-planar graph.
\newblock {\em Journal of Combinatorial Theory, Series B}, 89(1):43--76, 2003.

\bibitem{schaffer1989optimal}
A.~A. Sch{\"a}ffer.
\newblock Optimal node ranking of trees in linear time.
\newblock {\em Information Processing Letters}, 33(2):91--96, 1989.

\bibitem{surveychi}
A.~Scott and P.~Seymour.
\newblock A survey of $\chi$-boundedness.
\newblock
  \url{https://web.math.princeton.edu/~pds/papers/chibounded/paper.pdf}, 2018.

\bibitem{Trotter1992}
W.~T. Trotter.
\newblock {\em Combinatorics and partially ordered sets}.
\newblock Johns Hopkins Series in the Mathematical Sciences. Johns Hopkins
  University Press, 1992.

\bibitem{wanke1994k}
E.~Wanke.
\newblock $k$-{NLC} graphs and polynomial algorithms.
\newblock {\em Discrete Applied Mathematics}, 54(2-3):251--266, 1994.

\bibitem{zhu2009coloring}
X.~Zhu.
\newblock Colouring graphs with bounded generalized colouring number.
\newblock {\em Discrete Mathematics}, 309(18):5562--5568, 2009.

\end{thebibliography}

\end{document}